\numberwithin{equation}{section}
\newtheorem{theorem}{Theorem}[section]
\newtheorem{lemma}[theorem]{Lemma}
\newtheorem{proposition}[theorem]{{\bf Proposition}}
\newtheorem{corollary}[theorem]{Corollary}
\newenvironment{proof}[1][Proof.]{
\begin{trivlist}
\item[\hskip \labelsep {\bfseries #1}]}{\hspace*{\fill}$\Box$\end{trivlist}
}
\renewcommand{\Pr}[1]{\mathop{\mathrm{Pr}}\left[\,#1\,\right]}
\newcommand{\ET}[1]{\mathbb{E}\!\left[\,T(#1)\,\right]}
\newcommand{\Ex}[1]{\mathbb{E}\!\left[\,#1\,\right]}
\newcommand{\hei}{\mathop{\mathrm{height}}}
\newcommand{\lev}{\mathop{\mathrm{level}}}
\newcommand{\Ep}{E}
\newcommand{\Oh}{\mathcal{O}}
\newcommand{\abs}[1]{\left|\mathinner{#1}\right|}
\newcommand{\floor}[1]{\left\lfloor\mathinner{#1} \right\rfloor}
\newcommand{\ceil}[1]{\left\lceil\mathinner{#1} \right\rceil}
\newcommand{\os}[1]{\left\{\mathinner{#1}\right\}}
\newcommand{\sse}{\subseteq}
\newcommand{\N}{\mathbb{N}}
\newcommand{\Z}{\mathbb{Z}}
\newcommand{\R}{\mathbb{R}}
\renewcommand{\phi}{\varphi}
\newcommand{\eps}{\varepsilon}
\newcounter{AlgorithmusCounter}[section]
\renewcommand{\theAlgorithmusCounter}{\arabic{section}.\arabic{AlgorithmusCounter}}
\newenvironment{algorithmus}[1]{%
  \refstepcounter{AlgorithmusCounter}%
  \begin{tabbing}
    \hspace{9cm}\=\kill
    \tab\rule{\textwidth}{0.8pt} \\
    \parbox{.99\textwidth}{\textbf{Algorithm \theAlgorithmusCounter\ }
    \emph{#1}} \\[-2.5mm]
    \tab\rule{\textwidth}{0.4pt} \\
}{
    \\[-3mm]
    \tab\rule{\textwidth}{0.8pt}
  \end{tabbing}
}
\newcommand{\comment}[1]{ \> \tab $(*$ {\footnotesize #1} $*)$}  
\newcommand{\cofunction}{\textbf{function }}
\newcommand{\coprocedure}{\textbf{procedure }}
\newcommand{\coendfunction}{\textbf{endfunction }}
\newcommand{\coendprocedure}{\textbf{endprocedure }}
\newcommand{\cobegin}{\textbf{begin }}
\newcommand{\coif}{\textbf{if }}
\newcommand{\cothen}{\textbf{then }}
\newcommand{\coelse}{\textbf{else }}
\newcommand{\cofor}{\textbf{for }}
\newcommand{\codo}{\textbf{do }}
\newcommand{\coendif}{\textbf{endif }}
\newcommand{\coendfor}{\textbf{endfor }}
\newcommand{\coreturn}{\textbf{return }}
\newcommand{\coto}{\textbf{to }}
\newcommand{\cowhile}{\textbf{while }}
\newcommand{\coendwhile}{\textbf{endwhile }}
\newcommand{\coand}{\textbf{and }}
\newcommand{\IndentLength}{\hspace*{1em}}
\newcommand{\tab}{}
\newcommand{\tabb}{\IndentLength}
\newcommand{\tabbb}{\tabb\IndentLength}
\newcommand{\tabbbb}{\tabbb\IndentLength}
\renewcommand{\log}{\lg}
\title{QuickHeapsort: Modifications  and Improved Analysis}
\author{Volker Diekert \and Armin Wei\ss}
\begin{document}

\author{Volker Diekert \qquad Armin Wei\ss \\[5mm]
 Universit{\"a}t Stuttgart, FMI \\
 Universit{\"a}tsstra{\ss}e 38 \\
 D-70569 Stuttgart, Germany \\[5mm]
 \texttt{$\{$diekert$,$weiss$\}$@fmi.uni-stuttgart.de}}

\maketitle

\begin{abstract}
  \noindent
We present a new analysis for QuickHeapsort splitting it into the analysis of the partition-phases and the analysis of the heap-phases. This enables us to consider samples of non-constant size for the pivot selection and leads to better theoretical bounds for the algorithm. 

Furthermore we introduce some modifications of QuickHeapsort, both in-place and using $n$ extra bits. We show that on every input the expected number of comparisons is $n\lg n - 0.03n + o(n)$  (in-place) respectively $ n\lg n -0.997 n+ o (n)$ (always $\lg n= \log_2 n$).  Both estimates improve the previously known best results. (It is conjectured \cite{Wegener93} that the in-place algorithm Bottom-Up-Heapsort uses at most $n\lg n + 0.4 n$ on average and for Weak-Heapsort which uses $n$ extra bits the average number of comparisons is at most $n\lg n -0.42n$ \cite{EdelkampS02}.) Moreover, our non-in-place variant can even compete with index based Heapsort variants  (e.g.\ Rank-Heapsort \cite{WangW07}) and Relaxed-Weak-Heapsort ($ n\lg n -0.9 n+ o (n)$ comparisons in the worst case) for which no $\Oh(n)$-bound on the number of extra bits is known.
  \medskip

  \noindent
  \textbf{Keywords.}\, In-place sorting - heapsort - quicksort - analysis of algorithms 
\end{abstract}
\section{Introduction}
QuickHeapsort is a combination of Quicksort and Heapsort which was first described by Cantone and Cincotti \cite{CantoneC02}. It is based on Katajainen's idea for  Ultimate Heapsort \cite{Katajainen98}. In contrast to Ultimate Heapsort it does not have any  $\mathcal{O}(n\lg n)$ bound for the worst case running time ($\lg n= \log_2 n$). Its advantage is that it is very fast in the average case and hence not only of theoretical interest.

Both algorithms have in common that first the array is partitioned into two parts. Then in one part a heap is constructed and the elements are successively extracted. Finally the remaining elements are treated recursively. The main advantage of this method is that 
 for the sift-down only <one comparison per level is needed, whereas standard Heapsort needs two comparisons per level (for a description of standard Heapsort see some standard textbook, e.g.\ \cite{CLRS09}). This is a severe drawback and one of the reasons why standard Heapsort cannot compete with Quicksort in practice (of course there are also other reasons like cache behavior). Over the time a lot of  solutions to this problem appeared like Bottom-Up-Heapsort \cite{Wegener93} or MDR-Heapsort \cite{McDiarmidR89},\cite{Wegener91}, which both perform the sift-down by first going down to some leaf and then searching upward for the correct position. Since one can expect that the final position of some introduced element is near to some leaf, this is a good heuristic and it leads to provably good results.
The difference between QuickHeapsort and Ultimate Heapsort lies in the choice of the pivot element for partitioning the array. While for Ultimate Heapsort the pivot is chosen as median of the whole array, for QuickHeapsort the pivot is selected as median of some smaller sample (e.g.\ as median of 3 elements).

  In \cite{CantoneC02} the basic version with fixed index as pivot is analyzed and~-- together with the median of three version~-- implemented and compared with other Quick- and Heapsort variants. In \cite{EdelkampS02} Edelkamp and Stiegeler compare these variants with so called Weak-Heapsort \cite{Dutton93} and some modifications of it (e.g.\ Relaxed-Weak-Heapsort). 
 Weak-Heapsort beats basic QuickHeapsort with respect to the number of comparisons, however it needs $\mathcal{O}(n)$ bits extra-space (for Relaxed-Weak-Heapsort this bound is only conjectured), hence is not in place.

We split the analysis of QuickHeapsort into three parts: the partitioning phases, the heap construction and the heap extraction. This allows us to get better bounds for the running time, especially when choosing the pivot as median of a larger sample. It also simplifies the analysis.
We introduce some modifications of QuickHeapsort, too. The first one is in-place and needs $n\lg n - 0.03n + o(n)$ comparisons on average what is to the best of our knowledge better than any other known in-place Heapsort variant. 
We also examine a modification using $\mathcal{O}(n)$ bits extra-space, which applies the ideas of MDR-Heapsort to QuickHeapsort. With this method we can bound the average number of comparisons to $ n\lg n -0.997 n+ o (n)$.
Actually, a complicated, 
iterated in-place MergeInsertion uses only $n\lg n -1.3n+\mathcal{O}(\lg n)$ comparisons, \cite{Reinhardt92}. Unfortunately, for practical purposes this algorithm is 
not competitive.

Our contributions are as follows: 
1. We give a simplified analysis which gives better bounds than previously known. 
2. Our approach yields the first precise analysis of  QuickHeapsort when the pivot element is taken from a 
larger sample. 
3. We give a simple in-place modification of QuickHeapsort which saves $0.75 n$ comparisons. 
4. We give a modification of QuickHeapsort using $n$ extra bits only and 
we can bound the expected number of comparisons. This bound is 
better than  the  previously known for the worst case of Heapsort variants using $\Oh(n\lg n)$ extra bits for which best and worst case are almost the same. 
5. We have implemented QuickHeapsort, and our experiments confirm the theoretical predictions.

The paper is organized as follows: \prettyref{sec:QuickHeapsort}  briefly describes the basic QuickHeapsort algorithm together 
with our first improvement. 
In \prettyref{sec:analysis} we analyze the expected running time of QuickHeapsort.
Then we introduce some improvements in \prettyref{sec:modifications} allowing $\Oh(n)$ additional bits. 
Finally, in  \prettyref{sec:experiments}, we present our experimental results comparing the different versions of QuickHeapsort with other Quicksort and Heapsort variants.

\section{QuickHeapsort}\label{sec:QuickHeapsort}

A \emph{two-layer-min-heap} is an array $A[1..n]$ of $n$ elements together with 
 a partition $(G,R)$ of $\{1,\dots,n\}$ into \emph{green} and \emph{red} elements such that for all $g\in G, r\in R$ we have $A[g] \leq A[r]$. Furthermore, the green elements $g$ satisfy the heap condition $A[g] \leq\min\{ A[2g],A[2g +1]\}$, and if  $g$ is red, then $2g$ and $2g +1$ are red, too. 
 (The conditions are required to hold, only if the indices involved are in the range of 
 $1$ to $n$.) 
The green elements are called ``green'' because the they can be extracted out of the heap without caution, whereas the ``red'' elements are blocked. \emph{Two-layer-max-heaps} are defined analogously. We can think 
of a two-layer-heap as rooted binary tree such that each node is either green or red. 
Green nodes satisfy the standard heap-condition, children of red nodes are red.
Two-layer-heaps were defined in \cite{Katajainen98}. In \cite{CantoneC02} for the same concept a different language is used (they describe the algorithm in terms of External Heapsort). 
 Now we are ready to describe the QuickHeapsort algorithm
 as it has been proposed in \cite{CantoneC02}. Most of it also can be found in pseudocode in \prettyref{app:pseudo_basic}.

 We intend to sort  an array $A[1.. n]$. 
 First, we choose a \emph{pivot} $p$. This is the randomized part of the algorithm. Then, just as in Quicksort,
 we rearrange the array according to $p$.  That means, 
 using $n-1$ comparisons the partitioning function returns an index $k$ and rearranges the array $A$ so that  
 $A[i]\geq  A[k]$ for $i< k$, $A[k] =p$, and $A[k]\geq A[j]$ for $k< j$. 
 After the partitioning a two-layer-heap is built out of the elements 
  of the smaller part of the array, either the part left of the pivot or right of the pivot. We call this smaller part \emph{heap-area} and the larger part \emph{work-area}. 
  More precisely, if $k-1 < n-k$, then $\os{1, \ldots, k-1}$ is the  heap-area and
  $\os{k+1, \ldots, n}$ is the  work-area. If $k-1 \geq  n-k$, then $\os{1, \ldots, k-1}$ is the  work-area and
  $\os{k+1, \ldots, n}$ is the  heap-area. Note that we know the final position of the pivot element without any further comparison. 
Therefore, we do not count it to the heap-area nor to the work-area.
  If the heap-area the part of the array left of the pivot, a two-layer-max-heap is built, otherwise a two-layer-min-heap is built.

  At the beginning the heap-area is an ordinary heap, hence it is a
two-layer-heap consisting  of green elements, only. 
  Now the heap extraction phase starts. We assume that we are in the case of a max-heap. The other case is symmetric. Let $m$ denote the size of the heap-area. 
The $m$ elements of the heap-area are moved to the work-area. 
The extraction of one element works as follows: the root of the heap is placed at the current position of the work-area (which at the beginning is its last position). Then, starting from the root the resulting \lq\lq{}hole\rq\rq{} is trickled down: always the larger child is moved up into the vacant position and then this child is treated recursively. This stops as soon as a leaf is reached. We call this the SpecialLeaf procedure (\prettyref{alg:special_leaf}) according to \cite{CantoneC02}. 
 Now, the element which before was at the current position in the work-area is placed as red element in this hole at the leaf in the heap-area.
Finally the current position in the work-area is moved by one and the next element can be extracted.

The procedure sorts correctly, because after the partitioning it is guaranteed that all red elements are smaller than all green elements. Furthermore there is enough space in the work-area to place all green elements of the heap, since the heap is always the smaller part of the array.
After extracting all green elements the pivot element it placed at its final position and the remaining elements are sorted recursively.

Actually we can improve the procedure, thereby saving  $3n/4$ comparisons by a simple trick. 
Before the heap extraction phase starts in the heap-area with $m$ elements, we perform at most $\frac{m+2}{4}$ additional comparisons in order to  arrange all pairs of leaves which share a parent such that the left child is 
not smaller than its right sibling. 
 Now, in every call of SpecialLeaf, we can save exactly one comparison, since we do not need to compare two leaves. For a max-heap we only need to move up the left child and put the right one at the place of the former left one. Summing up over all heaps during an execution of standard QuickHeapsort, we invest $\frac{n+2t}{4}$ comparisons in order to save $n$ comparisons, where 
 $t$ is the number of recursive calls. The expected number of $t$ is in $\Oh(\lg n)$.  
 Hence, we can expect to save $\frac{3n}{4}+\Oh(\lg n)$ comparisons.
 We call this version the \emph{improved} variant of  QuickHeapsort.\\

\section{Analysis of QuickHeapsort}\label{sec:analysis}
This section contains the main contribution of the paper. 
We analyze the number of comparisons. 
By $n$ we  denote the number of elements of an array  to be sorted.
We use standard $\Oh$-notation where $\Oh(g)$, $o(g)$, and $\omega(g)$  denote classes of functions. 
In our analysis we do not assume any random distribution of the input, i.e. it is valid for every permutation of the input array. 
  Randomization is used however for pivot selection. 
With $\Pr{e}$ we denote the probability of some event $e$. The expected 
value of a random variable $T$ is denoted by $\Ex{T}$.
 
The number of assignments is bounded by some small constant times the number of comparisons. Let $T(n)$ denote 
the number of comparisons during QuickHeapsort on a fixed array of $n$ elements.
 We are going to split the analysis of QuickHeapsort into three parts:
\begin{enumerate}
\item Partitioning with an expected number of comparisons $\Ex{T_{\mathrm{part}}(n)}$ (average case).
\item Heap construction with at most $T_{\mathrm{con}}(n)$ comparisons (worst case).
\item Heap extraction (sorting phase) with at most $T_{\mathrm{ext}}(n)$
comparisons (worst case).
\end{enumerate}
We analyze the three parts separately and put them together at the end. 
The partitioning is the only randomized part of our algorithm. 
The expected number of comparisons depends on the selection method for the pivot.
For the expected number of comparisons
by QuickHeapsort on the input array we obtain $\ET{n} \leq T_{\mathrm{con}}(n)
+T_{\mathrm{ext}}(n) +{\mathbb{E}}[T_{\mathrm{part}}(n)]$.
\begin{theorem}\label{thm:exptime}
The expected number  $\ET{n}$ of comparisons
by basic resp.\ improved QuickHeapsort with pivot as median of $p$ randomly selected elements on a fixed input array of size $n$ is $\ET{n}\leq n\lg n + c n+ o(n)$ with $c$ as follows:
\begin{center}
\begin{tabular}{c|c|c}
$p$  ~&~$c$ basic~ &~$c$ improved\\\hline
$1$ &  $+2.72$ & $+1.97$ \\
$3$ &   $+1.92$ & $+1.17$ \\
$f(n)$ & $+0.72 $ & $-0.03$\\
\end{tabular}
\end{center}
Here, $f\in \omega(1)\cap o(n)$ with $1\leq f(n) \leq n$, e.g., 
$f(n) = \sqrt{n}$ and we assume that we choose the median of $f(n)$ randomly selected elements 
in time $\Oh(f(n))$.
\end{theorem}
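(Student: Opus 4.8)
The plan is to follow the three-way split announced right before the theorem and bound each piece separately, then add them. First I would handle the two deterministic heap phases, which do not depend on $p$ at all. For the heap construction on a heap-area of size $m$, the standard bottom-up build-heap uses at most $2m$ comparisons (or $cm$ for a suitable small constant; for the \emph{improved} variant one saves the leaf-pairing comparisons), and since all the heap-areas arising in the recursion have total size at most $n$, summing gives $T_{\mathrm{con}}(n) = \Oh(n)$ — a linear term that gets absorbed into $cn$ but must be tracked with the right constant. For the heap extraction, each $\mathtt{SpecialLeaf}$ call costs at most $\hei$ comparisons in the basic version and $\hei - 1$ in the improved version, where $\hei \le \lceil \lg m\rceil$ is the height of the current heap of size $m$; extracting all $m$ green elements therefore costs at most $m\lg m + \Oh(m)$, and this is where the leading $n\lg n$ term comes from (after one checks that the heap-area is always the smaller half, so its size is $\le n/2$, and that the recursion on the larger part contributes a geometrically controlled lower-order correction to $n\lg n$). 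The $-0.75n$ in the ``improved'' column is exactly the $3n/4$ saving argued in Section~\ref{sec:QuickHeapsort}, so the basic-vs-improved columns differ by precisely that constant.

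Second, and this is the crux, I would bound the expected partitioning cost $\Ex{T_{\mathrm{part}}(n)}$. Let $T_p(n)$ be the number of elements that lie in \emph{heap-areas} over the whole execution; the number of partitioning comparisons is $T_p(n) + \Oh(f(n)\cdot(\text{number of recursive calls}))$ plus the pivot-selection overhead, but the cleaner accounting is: each partitioning step on a subarray of size $n'$ costs $n'-1$ comparisons, and $n'$ shrinks by the size of the heap-area plus one at each level. Writing a recurrence $\Ex{T_{\mathrm{part}}(n)} = n - 1 + \Oh(f(n)) + \Ex{T_{\mathrm{part}}(N)}$ where $N$ is the random size of the work-area (the larger part), I would use the fact that the pivot is the median of $p$ random samples to control the distribution of the split ratio $N/n$. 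For $p=1$ the pivot is uniform, giving the classical $\Ex{N} \approx 3n/4$ and a solution $\Ex{T_{\mathrm{part}}(n)} \approx c_p\cdot n$ with $c_1 = 3$; for $p=3$ the median-of-three concentration pulls the expected larger-half fraction closer to $1/2$, yielding a smaller constant; and for $p = f(n) \in \omega(1)\cap o(n)$ the sample median concentrates at the true median, so the split is $1/2 + o(1)$, the work-area halves each round, and $\Ex{T_{\mathrm{part}}(n)} = 2n + o(n)$ — which, combined with the $\Oh(f(n)\lg n)$ total sampling cost being $o(n)$, gives the best constant. The three rows of the table are then obtained by plugging these three partition constants into $c = (\text{extraction constant}) + (\text{construction constant}) + (\text{partition constant}) - [0.75 \text{ if improved}]$ and doing the arithmetic.

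The main obstacle is getting the \emph{constants exactly right rather than just up to $\Oh(n)$}, because the theorem claims sharp values like $+2.72$ and $-0.03$. This requires three delicate pieces of bookkeeping: (i) pinning down the exact per-level cost of extraction, i.e.\ replacing the crude $\lceil\lg m\rceil$ bound on the SpecialLeaf depth by a tight average over the $m$ extractions — the sum $\sum_{j=1}^{m}\lfloor\lg j\rfloor$ type estimate — and verifying the lower-order terms are genuinely $o(n)$ and not a hidden linear term; (ii) solving the partition recurrence $\Ex{T_{\mathrm{part}}(n)} = n-1+\Ex{T_{\mathrm{part}}(N)}$ to first order in $n$, which amounts to computing $\lim n/(n - \Ex{N})$, i.e.\ the reciprocal of the expected fraction of elements removed per round, and for $p=3$ this needs the explicit Beta-distribution moments of the median of three uniforms (expected larger part $= 11n/16$, giving constant $16/5$... the exact value must be chased through); and (iii) making sure the recursion itself contributes only $o(n)$ — since the number of recursive calls is $\Oh(\lg n)$ in expectation and each carries an $\Oh(f(n))$ sampling overhead, one needs $f(n)\lg n = o(n)$, which holds for $f(n)=\sqrt n$ but is a real constraint on the hypothesis. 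Once these three are nailed, the addition is routine and the table drops out.
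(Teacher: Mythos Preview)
Your three-way split matches the paper's, and the improved/basic $0.75n$ offset is handled correctly. But several of the constants you sketch are off, and the extraction analysis rests on the wrong model.

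First, arithmetic: for $p=1$ the recursion-on-the-larger-part recurrence gives $\Ex{T_{\mathrm{part}}(n)}\leq 4n$, not $3n$ (your own formula ``reciprocal of the expected fraction removed'' with $\Ex{N}\approx 3n/4$ already yields $1/(1-3/4)=4$). For the heap construction the paper does \emph{not} use the $2m$ bound; it invokes the in-place result of Chen et al.\ to get $T_{\mathrm{con}}(n)\leq 1.625n+o(n)$. Without that, the table entries come out $0.375$ too large.

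Second, and more importantly, your proposed extraction estimate ``$\sum_{j=1}^{m}\lfloor\lg j\rfloor$'' is the wrong picture: in QuickHeapsort the heap does \emph{not} shrink between extractions. Every \texttt{SpecialLeaf} descends to a leaf of the full $m$-element tree, so each call costs exactly $\lfloor\lg m\rfloor$ or $\lfloor\lg m\rfloor-1$. The paper's bound $E(m)\leq m(\lfloor\lg m\rfloor-1)+2\{m\}+\Oh(\lg m)$ comes from a charging argument: when a position $v$ turns red, the cost of that \texttt{SpecialLeaf} is at most $\hei(v)+\lev(v)$, and one counts how many $v$ have $\hei(v)+\lev(v)=\lfloor\lg m\rfloor$ versus $\lfloor\lg m\rfloor-1$.

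Third, ``geometrically controlled lower-order correction'' hides the real work. Summing $E(m_i)$ over the sequence of heaps requires showing that $\sum_i\bigl(m_i\lfloor\lg m_i\rfloor+2\{m_i\}\bigr)$ is maximized, among all sequences satisfying $2m_i\leq n-\sum_{j<i}m_j$, by the geometric sequence $m_i=n/2^i$. The paper does this via a piecewise-linear function $F(x)=x\lfloor\lg x\rfloor+2\{x\}$ and an exchange argument (\prettyref{lm:F_ineq}, \prettyref{lm:fred}, \prettyref{lm:F_sum}), yielding $T_{\mathrm{ext}}(n)\leq n\lfloor\lg n\rfloor+2\{n\}-3n+\Oh(\lg^2 n)\leq n\lg n-2.9139n$. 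Plugging $-2.9139+1.625+\{4,\,3.2,\,2\}$ then gives the three basic-column constants. Your outline never isolates this step, and a naive bound like $m_i\lg m_i\leq m_i\lg(n/2)$ loses an entire $n$.
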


As we see, the selection method for the pivot is very important. However, one should notice that the bound for fixed size samples for pivot selection are not tight. 
The proof of these results are postponed to \prettyref{sec:pr}.
Note that it is enough to prove the results without the improvement, 
since the difference is always $0.75n$. 

\subsection{Heap Construction}\label{sec:heap_con}
The standard heap construction \cite{Floyd64} needs at most $2m$ comparisons to construct a heap of size $m$ in the worst case and approximately $1.88 m$ in the average case. 
For the mathematical analysis better theoretical bounds can be used. 
The best result we are aware of is due to Chen et al. in \cite{ChenEEK12}. According to this result we have
$T_{\mathrm{con}}(m)\leq 1.625 m + o(m)$.
Earlier results are of similar magnitude, by \cite{Chen93} it has been known that $T_{\mathrm{con}}(m)\leq 1.632 m + o(m)$ and by \cite{GonnetM86} it has been
known $T_{\mathrm{con}}(m)\leq 1.625 m + o(m)$, but Gonnet and Munro used $\Oh(m)$
extra bits to get this result, whereas the new result of Chen et al.~is in-place
(by using only $\Oh(\lg m)$ extra bits).

During the execution of QuickHeapsort over $n$ elements, every element is part of a heap only once.  Hence, the sizes of all heaps during the entire procedure sum up to $n$. With the result of \cite{ChenEEK12}
 the total number of comparisons performed in the construction of all heaps satisfies:
\begin{proposition}\label{prop:chen}
$ T_{\mathrm{con}}(n) \leq  1.625 n + o(n)$. 
\end{proposition}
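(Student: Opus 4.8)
The plan is to reduce the statement to the single-heap bound $T_{\mathrm{con}}(m)\le 1.625\,m+o(m)$ of \cite{ChenEEK12} by a disjointness argument. During one run of QuickHeapsort, each time a subarray of the current recursion is partitioned a heap is built on its heap-area; those elements become the green elements of the heap, and after the extraction phase they occupy their final sorted positions and are never touched again, while exactly the red elements together with the untouched part of the work-area are handed to the next recursive call. Hence the element sets on which the successive heaps are constructed are pairwise disjoint, so if $m_1,\dots,m_t$ denote the sizes of all heaps built during the run, then $\sum_{i=1}^{t} m_i\le n$.

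With this in hand I would simply add up the per-heap estimates,
\[
 T_{\mathrm{con}}(n)\ \le\ \sum_{i=1}^{t}\Bigl(1.625\,m_i+r(m_i)\Bigr)\ \le\ 1.625\,n\ +\ \sum_{i=1}^{t}r(m_i),
\]
where $r(m)$ denotes the excess of the single-heap bound over $1.625\,m$ (so $r(m)=o(m)$, and $r(m)\le 0.375\,m$ is always available from Floyd's construction). It remains to show $\sum_i r(m_i)=o(n)$, and this is the only step that is not mechanical: a priori $t$ can be as large as $\Theta(n)$, so one cannot bound each $r(m_i)$ by a fixed fraction of $m_i$ and sum. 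I would split the heaps at a threshold $\tau=\tau(n)$ with $\tau\to\infty$ and $\tau=o(n)$, say $\tau=\sqrt n$. There are at most $n/\tau$ heaps of size $\ge\tau$, and each of them satisfies $r(m_i)\le\bigl(\sup_{m\ge\tau}r(m)/m\bigr)\,m_i$, so together they contribute at most $\bigl(\sup_{m\ge\tau}r(m)/m\bigr)\,n=o(n)$. For the heaps of size $<\tau$ one uses that for heaps that small, construction already costs at most $1.625\,m$ comparisons, so $r(m_i)=0$ there: for sizes below a fixed absolute constant this is checked directly from Floyd's construction \cite{Floyd64}, and for the remaining sizes below $\tau$ one invokes the single-heap guarantee in the sharper form that its excess over $1.625\,m$ is non-positive. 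Adding the two contributions yields $\sum_i r(m_i)=o(n)$, and hence $T_{\mathrm{con}}(n)\le 1.625\,n+o(n)$.

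The genuinely new ingredient is the disjointness observation; everything else is the cited single-heap bound plus the bookkeeping of the lower-order terms, and it is exactly that bookkeeping that I expect to be the main obstacle. One has to rule out that some regime of heap sizes---in particular the moderate ones between the absolute constant and the threshold $\tau$---accumulates an excess that is linear in $n$, which is why the single-heap result has to be used in a form that pins down the coefficient $1.625$ (and not merely $1.625+o(1)$ per element) for all but the very largest heaps.
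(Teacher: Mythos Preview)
Your disjointness observation --- each element is a green heap element at most once, hence $\sum_i m_i\le n$ --- is precisely the paper's argument. In fact the paper's entire justification is the single sentence ``every element is part of a heap only once, hence the sizes of all heaps sum up to $n$,'' after which the proposition is asserted by citing \cite{ChenEEK12}. The paper does not address the summation of the per-heap $o(m)$ terms at all, so what you call ``the genuinely new ingredient'' is already the whole of the paper's proof.

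You are right to worry about that summation, and right that the pivot choices can force $t$ to be linear in $n$. But the resolution you sketch has a concrete error. The claim that Floyd's construction already gives at most $1.625m$ comparisons for all sizes below a fixed constant fails: for $m=31$ Floyd's worst case is $2\cdot 31-2\cdot 5=52>1.625\cdot 31$, and for larger $m$ it approaches $2m$. The bound cited from \cite{ChenEEK12} is only $1.625m+o(m)$, so without its explicit form you cannot conclude $r(m)\le 0$ on any interval between your absolute constant and $\tau$. An adversary choosing pivots so that every heap has a fixed size $M$ with $r(M)>0$ would then produce excess roughly $(r(M)/M)\cdot n$, which is linear in $n$. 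You correctly flag this as the obstacle; to actually close it one needs either the precise lower-order term in the Chen et al.\ bound (if it is non-positive the issue evaporates), or one must drop the worst-case formulation and argue in expectation, where the expected number of heaps is $O(\lg n)$ and the bookkeeping becomes trivial.
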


\subsection{Heap Extraction}\label{sec:oje}
For a real number $r\in \R$ with $r> 0$ we define $\os{ r}$ by the following condition
\[r = 2^k + \os r \text{ with $k \in \Z$ and $0\leq \os r < 2^k$} . \]
This means that $2^k$ is largest power of $2$ which is less than or equal to $r$
and $\os r$ is the difference to that power, i.e. $\os r = r - 2^{\floor{\lg r}}$. 
In this section we first analyze the extraction phase of one two-layer-heap of size $m$. After that, we bound the number of comparisons $T_{\mathrm{ext}}(n)$ performed in the worst case during all heap extraction phases of one execution of QuickHeapsort on an array of size $n$. 
\prettyref{thm:heapext} is our central result about heap extraction. 

\begin{theorem}\label{thm:heapext}
$T_{\mathrm{ext}}(n) 	\leq n \cdot\left(\lfloor\lg n\rfloor-3\right) + 2\{ n\}+ \mathcal{O}(\lg^2 n)  .$
\end{theorem}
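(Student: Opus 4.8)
The plan is to split $T_{\mathrm{ext}}(n)$ into the contributions of the individual heaps, bound each by a potential argument on the shape of the heap, and then add everything up, using that each heap-area is at most half of the current subproblem.

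\emph{Reduction to one heap.} Over the whole execution every element becomes a pivot exactly once or lies in exactly one heap-area, so if $m_1,m_2,\dots$ are the sizes of the successive heap-areas then $\sum_i m_i\le n$; moreover, since the heap-area is always the \emph{smaller} of the two parts, $m_i<n/2<2^{\lfloor\lg n\rfloor}$, hence $\lfloor\lg m_i\rfloor\le\lfloor\lg n\rfloor-1$. Writing $T^{(1)}_{\mathrm{ext}}(m)$ for the worst-case number of comparisons needed to extract one two-layer-heap of size $m$, we have $T_{\mathrm{ext}}(n)=\sum_i T^{(1)}_{\mathrm{ext}}(m_i)$, so it remains to bound $T^{(1)}_{\mathrm{ext}}(m)$ and then control this sum.

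\emph{One heap.} Here I would use the internal path length $\Phi=\sum_v\mathrm{depth}(v)$, the sum over the current \emph{green} nodes, as potential. Initially the green part is the complete heap on $m$ nodes, so $\Phi=\sum_{j=1}^m\lfloor\lg j\rfloor=m\lfloor\lg m\rfloor-2^{\lfloor\lg m\rfloor+1}+\lfloor\lg m\rfloor+2$. Every call of SpecialLeaf moves the hole along a root-to-leaf path, recolours exactly one green leaf red, hence removes one green node and decreases $\Phi$ by its depth $d$; the call uses at most $d$ comparisons — one per descent through a node with two green children, and none through a node with a single green child. Since $\Phi$ ends at $0$, the total for one heap is at most the initial $\Phi$, i.e.
\[
T^{(1)}_{\mathrm{ext}}(m)\ \le\ m\lfloor\lg m\rfloor-2^{\lfloor\lg m\rfloor+1}+\lfloor\lg m\rfloor+2\ =\ m\lfloor\lg m\rfloor-2m+2\{m\}+\Oh(\lg m),
\]
using $2^{\lfloor\lg m\rfloor+1}=2m-2\{m\}$. (For the \emph{improved} variant one additionally saves essentially one comparison per call, i.e.\ another $m-\Oh(\lg m)$; that refinement is only needed for the improved rows of \prettyref{thm:exptime}.) I expect this per-heap analysis to be the technical heart of the argument: making the single-child bookkeeping and the two-layer-heap invariants rigorous, and — if one wants the estimate to be tight — sharpening it, since the crude $\Phi$-bound is wasteful once the bottom level of the green heap has thinned out.

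\emph{Summation.} Inserting $\lfloor\lg m_i\rfloor\le\lfloor\lg n\rfloor-1$ and $2^{\lfloor\lg m_i\rfloor+1}>m_i$ gives $T^{(1)}_{\mathrm{ext}}(m_i)\le m_i(\lfloor\lg n\rfloor-2)+\lfloor\lg m_i\rfloor+2$, hence $T_{\mathrm{ext}}(n)\le(\lfloor\lg n\rfloor-2)n+\sum_i(\lfloor\lg m_i\rfloor+2)$. The extra $n$ that upgrades $\lfloor\lg n\rfloor-2$ to $\lfloor\lg n\rfloor-3$ (and the $2\{n\}$) comes from the \emph{geometric decay} $m_{i+1}\le(n_i-m_i-1)/2$ of the heap sizes, which forces $\sum_i m_i\lfloor\lg m_i\rfloor\le n\lfloor\lg n\rfloor-2n+\Oh(\lg^2 n)$ — proved by induction on $n$, its worst case being the nearly balanced sequence $m_i\approx n/2^i$ — and combining this with $2^{\lfloor\lg m_i\rfloor+1}>m_i$ and $\sum_i m_i\le n$ yields $T_{\mathrm{ext}}(n)\le n\lfloor\lg n\rfloor-3n+2\{n\}+\Oh(\lg^2 n)$. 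The error term really is $\Oh(\lg^2 n)$: when pivots are badly unbalanced there may be $\Omega(n)$ heaps, but then almost all of them have constant size and together contribute only $\Oh(n)$ to $T_{\mathrm{ext}}(n)$, which is absorbed by the $n(\lfloor\lg n\rfloor-3)$ main term, while the heaps of super-constant size number only $\Oh(\lg n)$ and their $\Oh(\lg m_i)$ slacks sum to $\Oh(\lg^2 n)$. The final passage from the real logarithm to $\lfloor\lg n\rfloor$ relies on $n\lg n\le n\lfloor\lg n\rfloor+2\{n\}$ (equivalently $x\le 2(1-2^{-x})$ for $x=\lg n-\lfloor\lg n\rfloor\in[0,1)$) and is routine.
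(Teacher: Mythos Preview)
Your per-heap bound has a genuine gap. In basic QuickHeapsort the SpecialLeaf procedure is colour-blind: it descends all the way to a leaf of the \emph{tree}, comparing the two children at every level regardless of whether they are green or red. A single extraction therefore costs the depth of the tree leaf reached, not the depth $d$ of the green node that turns red. Already for $m=3$ this fails: after two extractions only the root is still green, and the third SpecialLeaf still compares the two (red) children $A[2],A[3]$, so the call costs one comparison while $\Phi$ drops by $\lev(\text{root})=0$; the three extractions cost $3$ comparisons in total, but your initial $\Phi$ is only $0+1+1=2$. The paper's fix is to charge each extraction $\lev(v)+\hei(v)$, where $v$ is the recoloured node; the missing $\sum_v\hei(v)\approx m$ is exactly your shortfall, and the correct per-heap bound (\prettyref{eq:E_m}) is $m(\lfloor\lg m\rfloor-1)+2\{m\}+\Oh(\lg m)$, not $m(\lfloor\lg m\rfloor-2)+2\{m\}$. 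Your bound \emph{would} hold for the RedGreenArray variant of \prettyref{sec:modifications}, where SpecialLeaf does stop at the green frontier --- indeed, $T_{\mathrm{saveRR}}(n)\approx n$ there is precisely this saving.

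That you nevertheless land on the stated constant is an accident: a tight summation of your too-strong per-heap bound would give $n(\lfloor\lg n\rfloor-4)$, but your summation throws an $n$ back in (via the crude $2^{\lfloor\lg m_i\rfloor+1}>m_i$ instead of $=2m_i-2\{m_i\}$). Independently of this cancellation, the summation paragraph is not rigorous: the $2\{n\}$ term is never actually derived, and the claim that the $\Oh(\lg m_i)$ slacks add up to $\Oh(\lg^2 n)$ does not follow from ``small heaps are absorbed in the main term'' --- that changes the statement being proved. The paper handles both issues by working with the continuous function $F(x)=x\lfloor\lg x\rfloor+2\{x\}$: it first merges consecutive small heaps so that without loss of generality there are only $\Oh(\lg n)$ of them, and then proves $\sum_iF(m_i)\le\sum_{i=1}^{\lfloor\lg n\rfloor}F(n/2^i)\le F(n)-2n+\Oh(\lg n)$ via \prettyref{lm:F_ineq}--\prettyref{lm:F_sum}.
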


\newcommand{\pp}{v}

The proof of \prettyref{thm:heapext} covers  almost the rest of Section~\ref{sec:oje}. 
In the following, the \emph{height} $\hei(\pp)$ of an element $\pp$ in a heap $H$ is the maximal distance from that node to a leaf below it. The \emph{height} of $H$ is the height 
of its root. 
The \emph{level} $\lev(\pp)$ of $\pp$  to be its distance from the root.
In this section we want to count the comparisons during SpecialLeaf procedures, only. 
Recall that a SpecialLeaf procedure is a cyclic shift on a path from the root down to
some leaf, and the number comparisons is exactly the length of this path. Hence the upper bound is the height of the heap. But there is a better analysis. 

Let us consider a heap with $m$ green elements which are all extracted 
by SpecialLeaf procedures. The picture is as follows: First, we color the green root red. Next, we perform a cyclic shift defined by the SpecialLeaf procedure. In particular, the leaf is now red. Moreover, red positions remain red, but there is exactly one position $\pp$ which has changed its color from green to red. This position
$\pp$ is on the path defined  by the SpecialLeaf procedure. Hence, the number of comparisons needed to color the position $\pp$ red is bounded by $\hei(\pp) + \lev(\pp)$.

The total number of comparisons $E(m)$ to extract all $m$ elements of a Heap $H$ is therefore bounded by
$$E(m)  \leq \sum_{\pp \in H} (\hei(\pp) + \lev(\pp)).$$

We have $\hei (H)- 1 \leq \hei(\pp) + \lev(\pp) \leq  \hei (H)= \floor{\lg m} $ for all $\pp \in H$. 
We now count the number of elements $\pp$ where $\hei(\pp) + \lev(\pp) = \floor{\lg m}$ and the number of elements $\pp$ where $\hei(\pp) + \lev(\pp) = \floor{\lg m}-1$.
Since there are exactly $\os m +1 $ nodes of level $\floor{\lg m}$,
 there are at most $2\os m + 1 + \lg m$ elements $\pp$ with $\hei(\pp) + \lev(\pp) = \floor{\lg m}$. All other elements satisfy $\hei(\pp) + \lev(\pp) = \floor{\lg m}-1$. 
 We obtain 
 \begin{align}
E(m) 	&\leq 2\cdot\{m\}\cdot \lfloor\lg m\rfloor + (m - 2\cdot\{m\}) ( \lfloor\lg m\rfloor -1) + \mathcal{O}(\lg m)\nonumber\\
		&= m\cdot (\floor{\lg m}-1)+ 2\cdot\{m\}  + \mathcal{O}(\lg m) . \label{eq:E_m}
\end{align}
Note that this is an estimate of the worst case, however this analysis also shows that the best case only differs by $\Oh(\lg m)$-terms from the worst case.\\

Now, we want to estimate the number of comparisons in the worst case performed during  all heap extraction phases together. 
During QuickHeapsort over $n$ elements we create  a sequence $H_1, \ldots , H_t$ 
of heaps of green elements which are extracted using the SpecialLeaf procedure. Let $m_i = \abs{H_i}$ be the size of the $i$-th Heap. 
The sequence satisfies 
$2m_i \leq n-\sum_{j<i} m_j$, because heaps are constructed and extracted on  the smaller part of the array.

Here comes a subtle observation: Assume that $m_1 + m_2\leq n/2$. If we replace 
the first two heaps with one heap $H'$ of size $\abs H' = m_1 +m_2$, then 
the analysis using the sequence $H', H_3, \ldots , H_t$ cannot lead to a better bound. Continuing this way, we may assume
that we have $t \in \Oh(\lg n)$ and therefore 
$\sum_{1 \leq i \leq t} \mathcal{O}(\lg m_i) \sse \mathcal{O}(\lg^2 n).$
With \prettyref{eq:E_m} we obtain the bound
\begin{align}
T_{\mathrm{ext}}(n) 	\leq \sum_{i= 1}^{t} E(m_i) &= 
\left(\sum_{i= 1}^t  m_i\cdot \floor{\lg m_i} +2 \os{m_i}\right)  -n + \mathcal{O}(\lg^2 n) .\label{eq:T_ext_ineq}
\end{align}

Later we will replace the $m_i$ by other positive real numbers. Therefore we define the following notion. Let $1 \leq \nu \in \R$. We say a sequence $x_1,x_2,\ldots, x_t$ with $x_i\in \R^{>0}$ is \emph{valid} w.r.t.{} $\nu$, if for all $1\leq i \leq t$ we have 
$2x_i \leq \nu-\sum\limits_{j<i} x_j.$

As just mentioned the initial sequence $m_1,m_2\ldots , m_t$ is valid 
w.r.t.{} $n$. 
Let us define a continuous function $F:  \R^{>0} \to \R$ by
$F(x) = x\cdot \floor{\lg x} +2 \os{x}.$
It  is continuous since for $x = 2^k$, $k \in \Z$ we have
$F(x) = xk = \lim_{\eps \to 0} (x -\eps)(k-1) +2\os{x-\eps}$. 
It is piecewise differentiable with right derivative 
$\floor{\lg x} + 2$. Therefore: 
\begin{lemma}\label{lm:F_ineq}
Let $x \geq y > \delta \geq 0$. Then we have the inequalities:
 \[F(x) + F(y) \leq F(x+\delta) + F(y - \delta) \text{ and } F(x) + F(y) \leq F(x+y).\]
\end{lemma}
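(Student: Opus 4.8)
The plan is to deduce both inequalities from convexity of $F$. The excerpt already records that $F$ is continuous and piecewise differentiable with right derivative $\phi(x) := \floor{\lg x} + 2$; since $\phi$ is a non-decreasing step function, this makes $F$ convex on $\R^{>0}$. Concretely, $F$ is continuous and piecewise affine with breakpoints exactly at the powers $2^k$ ($k\in\Z$), with slope $k+2$ on $[2^k,2^{k+1})$, and these slopes increase with $k$. From this I would extract two facts to be used repeatedly: (i) the integral representation $F(b)-F(a) = \int_a^b \phi(t)\,dt$ for all $0 < a \le b$ (valid because $F$ is Lipschitz on compact subintervals of $\R^{>0}$ with $F' = \phi$ off the breakpoints); and (ii) $\lim_{x\to 0^+}F(x) = 0$, since on $[2^k,2^{k+1}]$ the affine function $F$ lies between $F(2^k)=k\,2^k$ and $F(2^{k+1})=(k+1)\,2^{k+1}$, both of which tend to $0$ as $k\to-\infty$. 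Extending $F$ by $F(0):=0$ keeps it convex, and letting $a\to 0^+$ in (i) gives $F(y)=\int_0^y\phi(t)\,dt$ for every $y>0$ ($\phi$ being integrable near $0$).

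With these in hand each inequality reduces to a short computation. For $x\ge y>\delta\ge 0$ the case $\delta=0$ is trivial, and for $\delta>0$ all of $y-\delta,\,y,\,x,\,x+\delta$ lie in $\R^{>0}$, so I would write
\begin{align*}
\bigl(F(x+\delta)+F(y-\delta)\bigr) - \bigl(F(x)+F(y)\bigr)
&= \int_{x}^{x+\delta}\phi(t)\,dt - \int_{y-\delta}^{y}\phi(t)\,dt\\
&= \int_{y-\delta}^{y}\bigl(\phi(t+x-y+\delta)-\phi(t)\bigr)\,dt \;\ge\; 0,
\end{align*}
after substituting $t\mapsto t+(x-y+\delta)$ in the first integral; here $x-y+\delta\ge 0$ (this is where $x\ge y$ enters) and $\phi$ is non-decreasing, so the integrand is nonnegative, giving $F(x)+F(y)\le F(x+\delta)+F(y-\delta)$. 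For the second inequality, using $F(0)=0$ and fact (ii), for $x,y>0$ I would similarly compute
\[
F(x+y)-F(x)-F(y) = \int_{x}^{x+y}\phi(t)\,dt - \int_{0}^{y}\phi(t)\,dt = \int_{0}^{y}\bigl(\phi(t+x)-\phi(t)\bigr)\,dt \;\ge\; 0
\]
by the substitution $t\mapsto t+x$ together with $x\ge 0$.

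The only point requiring real care rather than routine manipulation is the behaviour near $0$: one must check that $F$ extends continuously with $F(0)=0$ and that $\int_0^y\phi$ is a convergent improper integral equal to $F(y)$, which legitimizes the second computation. An alternative route that sidesteps the integrals is to invoke two textbook properties of convex functions directly: the first inequality is the statement that moving two points apart while fixing their sum does not decrease the sum of values of a convex function, and the second is superadditivity of a convex $F$ on $[0,\infty)$ with $F(0)\le 0$ (apply Jensen to $x=\tfrac{x}{x+y}(x+y)+\tfrac{y}{x+y}\cdot 0$ and to the symmetric splitting of $y$, then add). Either way the argument is short; the work is entirely in setting up the convexity/integral-representation framework.
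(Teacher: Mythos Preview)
Your proof is correct and follows essentially the same route as the paper: both use the integral representation $F(b)-F(a)=\int_a^b\phi$ with $\phi(t)=\floor{\lg t}+2$ non-decreasing to get the first inequality, and then pass to the limit $\delta\to y$ (equivalently, use $\lim_{\eps\to 0}F(\eps)=0$) for the second. The only cosmetic difference is that the paper bounds each integral separately by $\phi(x)\delta$ and $\phi(y)\delta$ and compares those, whereas you substitute to compare integrands pointwise; the ingredients are identical.
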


\begin{lemma}\label{lm:fred}
Let $1 \leq \nu \in \R$. 
For  all sequences  $x_1,x_2,\dots, x_t$ with $x_i\in \R^{>0}$, which are valid w.r.t.{} $\nu$, we have
$\sum\limits_{i= 1}^{t} F(x_i) \leq \sum\limits_{i= 1}^{\floor{\lg \nu}} F\left(\frac{\nu}{2^i}\right)$.
\end{lemma}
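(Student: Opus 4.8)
The plan is to show that the \emph{greedy} valid sequence $x_i=\nu/2^i$ maximizes $\sum_i F(x_i)$ among all sequences valid w.r.t.\ $\nu$; writing $G(\nu):=\sum_{i=1}^{\floor{\lg\nu}}F(\nu/2^i)$, this is exactly the claimed bound. I would induct on the length $t$ of the sequence. For the inductive step, peel off $x_1$: if $(x_1,\dots,x_t)$ is valid w.r.t.\ $\nu$, then $(x_2,\dots,x_t)$ is valid w.r.t.\ $a:=\nu-x_1$, and $0<x_1\le\nu/2$ forces $\tfrac{\nu}{2}\le a<\nu$; by the induction hypothesis $\sum_{i\ge2}F(x_i)\le G(a)$, so $\sum_{i=1}^t F(x_i)\le F(\nu-a)+G(a)$, and everything reduces to the single ``one-step'' inequality
\[ F(\nu-a)+G(a)\le G(\nu)\qquad\text{for }\tfrac{\nu}{2}\le a<\nu,\ \nu\ge2. \]
The base case $t=0$ and the range $1\le\nu<2$ are handled directly: if $1\le\nu<2$ then every $x_i\le\nu/2<1$ and $F\le0$ on $(0,1)$, so $\sum_iF(x_i)\le0=G(\nu)$; and for $t=0$ one only needs $G(\nu)\ge0$, which holds since every argument $\nu/2^i$ with $i\le\floor{\lg\nu}$ lies in $[1,\infty)$ and $F\ge0$ there.

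To prove the one-step inequality I would first record two facts. First, $F$ is convex on $\R^{>0}$: this is precisely the statement that its right derivative $\floor{\lg x}+2$ is non-decreasing (and \prettyref{lm:F_ineq} is the discrete shadow of this), and from convexity together with $F(0^+)=0=F(1)$ one gets $F\le0$ on $(0,1]$ and $F\ge0$ on $[1,\infty)$. Second, $G$ is convex on $[1,\infty)$ and obeys the recursion $G(\nu)=F(\nu/2)+G(\nu/2)$ for $\nu\ge2$; the recursion follows by peeling off the $i=1$ term and using $\floor{\lg(\nu/2)}=\floor{\lg\nu}-1$, while convexity follows from the reformulation $G(\mu)=\sum_{i\ge1}\max\{F(\mu/2^i),0\}$ for $\mu\ge1$ (the extra terms with $\mu/2^i<1$ vanish by the sign of $F$ there), which exhibits $G$ as a locally finite sum of convex functions. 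Granting these, put $g(b):=F(b)+G(\nu-b)$ on $(0,\nu/2]$. It is convex, $g(\nu/2)=F(\nu/2)+G(\nu/2)=G(\nu)$ by the recursion, and $g(b)\to 0+G(\nu)=G(\nu)$ as $b\to0^+$ by continuity of $F$ and $G$. Since a convex function lies on or below the chord joining two points of its graph and here both endpoint values equal $G(\nu)$, we get $g\le G(\nu)$ on $(0,\nu/2]$; taking $b=x_1=\nu-a$ yields the one-step inequality and closes the induction.

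The main obstacle is the one-step inequality, and inside it the convexity of $G$: a priori $G$ has kinks where $\floor{\lg\nu}$ jumps — there a new summand $F(\nu/2^{\floor{\lg\nu}+1})$ enters at value $F(1)=0$ — as well as at the dyadic breakpoints of each individual summand, and one must verify the slope only ever increases at such points. The $\max\{F(\cdot),0\}$ rewriting is what makes this bookkeeping evaporate, since a sum of convex functions is convex; alternatively one can compute the slope jump of $G$ at each breakpoint directly and check it is positive (it works out to $+1$ at the dyadic points), but that route is more calculational. Everything else — the peeling step, the endpoint evaluations of $g$, and the cases $\nu<2$ and $t=0$ — is straightforward.
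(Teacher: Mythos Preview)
Your argument is correct, but it follows a genuinely different route from the paper's proof. Both proofs induct on $t$ and both handle $1\le\nu<2$ by noting that every $x_i<1$ forces $F(x_i)\le0$. The difference is in the inductive step. The paper never analyzes $G$ at all: it works only with \prettyref{lm:F_ineq}. If $x_1+x_2\le\nu/2$ it merges $x_1,x_2$ into a single term (the merged sequence is still valid w.r.t.\ $\nu$ and $F(x_1)+F(x_2)\le F(x_1+x_2)$), while if $x_1+x_2>\nu/2$ it swaps so that $x_1\ge x_2$, replaces the pair by $(\nu/2,\,x_1+x_2-\nu/2)$ via \prettyref{lm:F_ineq}, and observes that the tail $(x_1+x_2-\nu/2,x_3,\dots,x_t)$ is valid w.r.t.\ $\nu/2$; induction on $t$ then finishes. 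Your approach instead peels off $x_1$, reduces to a sequence valid w.r.t.\ $\nu-x_1$, and isolates the one-step inequality $F(b)+G(\nu-b)\le G(\nu)$, which you dispatch with the convexity of $G$ (obtained by the neat rewriting $G(\mu)=\sum_{i\ge1}\max\{F(\mu/2^i),0\}$) together with the recursion $G(\nu)=F(\nu/2)+G(\nu/2)$. What the paper buys is minimality of tools: it never needs any property of $G$ beyond its definition, relying solely on the single rearrangement inequality \prettyref{lm:F_ineq}. What your approach buys is a cleaner picture of \emph{why} the greedy sequence is extremal: the function $b\mapsto F(b)+G(\nu-b)$ is convex with equal endpoint values $G(\nu)$, so it lies below $G(\nu)$ throughout; the extra ingredient you need, and justify, is the convexity of $G$.
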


\begin{proof}
The result is true for $\nu \leq 2$, because then $F(x_i) \leq F(\nu / 2)
\leq F(1) = 0$ for all $i$. Thus, we may assume $\nu \geq 2$. 
We perform induction on $t$. For $t=1$ the statement is clear, since $\lg \nu \geq 1$ and $x_1 \leq \nu/ 2$. Now let $t>1$.  By \prettyref{lm:F_ineq}, we have $F(x_1) + F(x_2) < F(x_1 + x_2)$. Now, if $x_1 + x_2 \leq \frac{\nu}{2}$, then the  sequence $x_1 + x_2,x_3, \dots, x_t$ is valid, too; and we are done by induction. Hence, we may assume  $x_1 + x_2 > \frac{\nu}{2}$. 
If $x_1\leq x_2$, then 
\begin{align*}
2x_1 &= 2x_2 + 2(x_1 - x_2) \leq \nu-x_1 + 2(x_1 - x_2)
=\nu-x_2 + x_1 - x_2 \leq \nu-x_2.
\end{align*}
Thus, if $x_1\leq x_2$, then the sequence $x_2,x_1,x_3,\ldots, x_t$ is valid, too. Thus, it is enough to consider $x_1\geq x_2$ with $x_1 + x_2 > \frac{\nu}{2}$.

%

We have  $\frac{\nu}{2} \geq 1$ and 
the sequence $x'_2,x_3,\dots x_t$ with $x'_2= x_1+x_2-\frac{\nu}{2}$ is valid w.r.t.{} $\nu/2$, because
\[x'_2=x_1+x_2-\frac{\nu}{2}\leq x_1 +\frac{\nu-x_1}{2} -\frac{\nu}{2}= \frac{x_1}{2}\leq \frac{\nu}{4} .\]  Therefore, by induction on $t$ and \prettyref{lm:F_ineq} we obtain the claim: 
\begin{equation*}
\sum_{i= 1}^{t} F(x_i) \leq F(\nu/2)+ F(x'_2)+\sum_{i= 3}^{t} F(x_i) 
\leq F(\nu/2)+  \sum_{i= 2}^{\floor{\lg \nu}} F\left(\frac{\nu}{2^i}\right)
\leq \sum_{i= 1}^{\floor{\lg \nu}} F\left(\frac{\nu}{2^i}\right) .\end{equation*}
%
\end{proof}

\begin{lemma}\label{lm:F_sum}
$ \sum\limits_{i= 1}^{\floor{\lg n}} F\left(\frac{n}{2^i}\right) 
\leq F(n) -2n + \Oh(\lg n)$.
\end{lemma}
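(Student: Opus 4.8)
The plan is a direct computation of the left-hand side. Write $N = \floor{\lg n}$. For every $i$ with $1 \le i \le N$ we have $2^{i} \le 2^{N} \le n$, so $n/2^{i} \ge 1$; and since $i$ is an \emph{integer}, $\floor{\lg(n/2^{i})} = \floor{\lg n} - i = N-i$, which in turn gives $\os{n/2^{i}} = n/2^{i} - 2^{N-i} = \os n/2^{i}$. Substituting these two facts into $F(x) = x\floor{\lg x} + 2\os x$ yields the clean identity
\[ F\!\parenth{\frac{n}{2^{i}}} = \frac{n(N-i)}{2^{i}} + \frac{2\os n}{2^{i}} \qquad (1 \le i \le N). \]

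First I would sum this over $i$ and split off the two elementary finite sums:
\[ \sum_{i=1}^{N} F\!\parenth{\frac{n}{2^{i}}} = n\sum_{i=1}^{N}\frac{N-i}{2^{i}} \;+\; 2\os n\sum_{i=1}^{N}\frac{1}{2^{i}} . \]
The geometric sum is $\sum_{i=1}^{N} 2^{-i} = 1 - 2^{-N} \le 1$. For the other sum, the substitution $j = N-i$ turns it into $2^{-N}\sum_{j=0}^{N-1} j\,2^{j}$, and the closed form $\sum_{j=1}^{m} j\,2^{j} = (m-1)2^{m+1} + 2$ gives $\sum_{i=1}^{N}(N-i)/2^{i} = (N-2) + 2^{1-N}$. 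Hence
\[ \sum_{i=1}^{N} F\!\parenth{\frac{n}{2^{i}}} = n(N-2) + \frac{2n}{2^{N}} + 2\os n\parenth{1 - 2^{-N}} \le n(N-2) + 2\os n + \Oh(1), \]
where in the last inequality I use $2^{N} \le n < 2^{N+1}$, so $2n/2^{N} \le 4$.

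To finish, compare this with the right-hand side: since $\floor{\lg n} = N$ we have $F(n) - 2n = nN + 2\os n - 2n = n(N-2) + 2\os n$. Therefore $\sum_{i=1}^{N} F(n/2^{i}) \le F(n) - 2n + \Oh(1)$, which is in particular at most $F(n) - 2n + \Oh(\lg n)$, as claimed. (The statement is asymptotic in $n$, so the few small values of $n$ for which the sum on the left is empty are harmless.)

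There is no real obstacle here; the only points to watch are that $\floor{\lg(n/2^{i})} = N-i$ relies on $i$ being an integer, and the bookkeeping of the closed form for $\sum j\,2^{j}$. In fact the argument gives the sharper error term $\Oh(1)$; I record only the weaker $\Oh(\lg n)$ because that is all that is needed when \prettyref{lm:F_sum} is fed, via \prettyref{lm:fred} and \prettyref{eq:T_ext_ineq}, into the proof of \prettyref{thm:heapext}, where everything is anyway absorbed into an $\Oh(\lg^{2} n)$ term.
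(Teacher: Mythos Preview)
Your proof is correct and follows essentially the same approach as the paper: both rely on the identities $\floor{\lg(n/2^i)}=\floor{\lg n}-i$ and $\os{n/2^i}=\os n/2^i$ for integer $i$, expand $F(n/2^i)$ accordingly, and sum the resulting geometric-type series. The only difference is cosmetic: the paper extends the finite sums to infinite ones and bounds the tail, incurring an $\Oh(\lg n)$ error, whereas you use the exact closed form $\sum_{j=1}^{m} j\,2^{j}=(m-1)2^{m+1}+2$ and thereby obtain the sharper $\Oh(1)$ bound.
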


\begin{proof}[\prettyref{lm:F_sum}]
\begin{align*}  
 \sum_{i= 1}^{\floor{\lg n}} F\left(\frac{n}{2^i}\right) 
&=  n\floor{\lg n} \cdot\sum_{i= 1}^{\floor{\lg n}}\frac{1}{2^{i}}- n \cdot \sum_{i= 1}^{\floor{\lg n}}\frac{i}{2^{i}} + 2\os n  \cdot\sum_{i= 1}^{\floor{\lg n}}\frac{1}{2^{i}}\\
&\leq  n\floor{\lg n} \cdot\sum_{i\geq 1}\frac{1}{2^{i}}- n \cdot \sum_{i\geq 1}\frac{i}{2^{i}} + 2\os n \cdot \sum_{i\geq 1}\frac{1}{2^{i}} +\frac{n}{2^{\floor{\lg n}}} \cdot \sum_{i > 0} \frac{i+\floor{\lg n}}{2^{i}} \\
&= n {\floor{\lg n}} -2n + 2 \os{n}  + \Oh(\lg n) .
\end{align*}
\end{proof}

%
%
%
Applying these lemmata to \prettyref{eq:T_ext_ineq} yields the proof of \prettyref{thm:heapext}.

\begin{corollary}\label{cor:T_ext}
We have
$T_{\mathrm{ext}}(n) 	\leq  n\lg n - 2.9139 n + \mathcal{O}(\lg^2 n).$
\end{corollary}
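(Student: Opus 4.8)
The plan is to start from \prettyref{thm:heapext}, which already gives
$T_{\mathrm{ext}}(n) \leq n\cdot(\floor{\lg n}-3) + 2\os n + \Oh(\lg^2 n)$, and to turn the
mixed terms $\floor{\lg n}$ and $\os n$ into a clean expression in one variable. Set
$x := n/2^{\floor{\lg n}}$, so that $x \in [1,2)$. First I would record the two identities
$\lg n = \floor{\lg n} + \lg x$, hence $n\floor{\lg n} = n\lg n - n\lg x$, and
$\os n = n - 2^{\floor{\lg n}} = n\parenth{1 - 1/x}$. Substituting both into the bound of
\prettyref{thm:heapext} collapses everything to
\[
T_{\mathrm{ext}}(n) \leq n\lg n + n\cdot g(x) + \Oh(\lg^2 n), \qquad
g(x) := -\lg x - 1 - \frac{2}{x}.
\]

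It then remains to maximize $g$ over $[1,2)$. I would differentiate, getting
$g'(x) = -\tfrac{1}{x\ln 2} + \tfrac{2}{x^2}$, which vanishes precisely at $x_0 = 2\ln 2
\approx 1.386$, and note that $g'$ is positive for $x<x_0$ and negative for $x>x_0$, so $x_0$
is the unique maximum on the closed interval (the endpoints both give $g(1)=g(2)=-3$, which
is smaller). Evaluating, $g(2\ln 2) = -\lg(2\ln 2) - 1 - 1/\ln 2 = -2 - \lg(\ln 2) - 1/\ln 2
\approx -2.913929$. Since $x$ only ranges over $[1,2)$ we conclude $g(x) \leq g(2\ln 2) \leq
-2.9139$, and plugging this back yields $T_{\mathrm{ext}}(n) \leq n\lg n - 2.9139\,n +
\Oh(\lg^2 n)$, as claimed.

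There is essentially no obstacle: the argument is a substitution followed by a one-variable
calculus optimization. The only points requiring a little care are that the numerical value
$-2.913929$ is indeed at most the rounded constant $-2.9139$ (so the displayed inequality is
valid in the stated direction), and that the $\Oh(\lg^2 n)$ error inherited from
\prettyref{thm:heapext} already dominates every lower-order contribution, so nothing below
the $\Theta(n)$ scale needs to be tracked.
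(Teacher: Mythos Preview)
Your proof is correct. The paper's own argument is a one-line citation: it observes that the bound of \prettyref{thm:heapext} reads $T_{\mathrm{ext}}(n) \leq F(n) - 3n + \Oh(\lg^2 n)$ in terms of the function $F(n) = n\floor{\lg n} + 2\os{n}$ introduced earlier, and then invokes \cite[Thm.~1]{Wegener91}, which states $F(n) - 2n \leq n\lg n - 1.9139\,n$.

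Your approach is the same computation made explicit: the substitution $x = n/2^{\floor{\lg n}}$ and the optimization of $g(x) = -\lg x - 1 - 2/x$ over $[1,2)$ is precisely what underlies Wegener's constant. So the two arguments are mathematically identical; the difference is only that the paper outsources the calculus step to a reference while you carry it out in full. Your version is self-contained, which is a mild advantage; the paper's version is shorter and ties the constant to its historical source.
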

\begin{proof}
By \cite[Thm. 1]{Wegener91} we have $F(n)-2n \leq  n\lg n - 1.9139 n $. Hence, \prettyref{cor:T_ext} follows directly from \prettyref{thm:heapext}.
\end{proof}

\subsection{Partitioning}\label{sec:pr}
In the following $T_{\mathrm{pivot}}(n)$ denotes the number of comparisons required to choose the pivot element in the worst case; and, as before,  $\Ex{T_{\mathrm{part}}(n)}$  denotes  the expected  
number of comparisons  performed during partitioning.  We have  the following recurrence:
\begin{align}\label{eq:gunnar}
 \Ex{T_{\mathrm{part}}(n)} &\leq n - 1 + T_{\mathrm{pivot}}(n) + \sum_{k=1}^n \Pr{\!\text{pivot }= k\!}\cdot \Ex{\!T_{\mathrm{part}}(\max\os{k-1,n-k})\!} .
\end{align}
If we choose the pivot at random, then we obtain  by standard methods:
\begin{align}\label{gunone}
 \Ex{T_{\mathrm{part}}(n)}  \leq n-1 + \frac{1}{n}\cdot \sum_{k=1}^n \Ex{T_{\mathrm{part}}(\max\os{k-1,n-k})}\leq 4n .
 \end{align}
Similarly, if we choose the pivot with the  median-of-three, then we obtain:
\begin{align}\label{gunthree} \Ex{T_{\mathrm{part}}(n)}  \leq 3.2n + \mathcal{O}(\lg n) .
\end{align}

The proof of the first part of \prettyref{thm:exptime} follows from the above eqations, \prettyref{thm:heapext}, 
and \prettyref{prop:chen}.
Using a growing number of elements  (as $n$ grows) as sample for the pivot selection,  we can do better. The second part of \prettyref{thm:exptime} follows from  \prettyref{thm:heapext}, \prettyref{prop:chen}, and \prettyref{thm:otto}.

\begin{theorem}\label{thm:otto}
Let $f\in \omega(1)\cap o(n)$ with $1\leq f(n) \leq n$.
When choosing the pivot as median of $f(n)$ randomly selected elements 
in time $\Oh(f(n))$ (e.g.\ with the algorithm of \cite{BlumFPRT73}), 
 the expected number of comparisons used in all recursive calls of partitioning is in $2n + o(n)$.
\end{theorem}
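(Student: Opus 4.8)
The plan is to use that partitioning spawns only a single recursive call, so the subproblem sizes form a strictly decreasing sequence $n = n_0 > n_1 > \cdots > n_L$ with $n_{i+1} = \max\os{S_i - 1,\, n_i - S_i}$, where $S_i$ is the rank, inside the level-$i$ array, of the pivot chosen at level $i$. Unrolling the recurrence \prettyref{eq:gunnar} along this path gives
\[
\Ex{T_{\mathrm{part}}(n)} \;\leq\; \Ex{\sum_{i=0}^{L}\bigl(n_i - 1 + T_{\mathrm{pivot}}(n_i)\bigr)} \;\leq\; \Ex{\sum_{i=0}^{L} n_i} \;+\; \Oh\!\left(\Ex{\sum_{i=0}^{L} f(n_i)}\right),
\]
so it suffices to show that the first expectation is $2n + o(n)$ and the second is $o(n)$.

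First I would record the concentration of the sample median: for every fixed $\eps > 0$, if the pivot is the median of $f(m)$ positions drawn uniformly at random from an array of size $m$, then its rank $S$ satisfies $\Pr{\abs{S - m/2} > \eps m} \to 0$ as $f(m)\to\infty$, uniformly in $m$. This is standard, since $\os{S\leq k}$ is precisely the event that at least $\ceil{f(m)/2}$ of the sampled positions have rank $\leq k$, a binomial (hypergeometric, for sampling without replacement) tail controlled by a Chernoff/Hoeffding estimate. From $n_{i+1} \leq n_i/2 + \abs{S_i - n_i/2} + 1$ and the trivial bound $n_{i+1}\leq n_i$, splitting on whether $\abs{S_i - n_i/2}\leq\eps n_i$, I get $\Ex{n_{i+1}\mid n_i} \leq n_i\bigl(\tfrac12+\eps+o_f(1)\bigr)+1$. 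Because $f\in\omega(1)$, for every $\delta > 0$ there is a constant $N_\delta$ such that $f(m)$ is large enough to guarantee $\Ex{n_{i+1}\mid n_i} \leq q\,n_i$ with $q := \tfrac12+\delta < 1$ whenever $n_i\geq N_\delta$.

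Then I would unroll. The events $A_i := \os{n_i\geq N_\delta}$ are nested decreasing, since the sizes strictly decrease; hence, writing $\mathbf{1}_{A_i}$ for the corresponding indicator,
\[
\Ex{n_{i+1}\mathbf{1}_{A_{i+1}}} \;\leq\; \Ex{n_{i+1}\mathbf{1}_{A_i}} \;=\; \Ex{\mathbf{1}_{A_i}\,\Ex{n_{i+1}\mid n_i}} \;\leq\; q\,\Ex{n_i\mathbf{1}_{A_i}},
\]
so $\Ex{n_i\mathbf{1}_{A_i}} \leq q^i n$ and $\sum_i\Ex{n_i\mathbf{1}_{A_i}} \leq n/(1-q) = 2n + \Oh(\delta n)$. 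Once $n_i < N_\delta$ the sizes stay below $N_\delta$ and strictly decrease, so at most $N_\delta$ further levels occur and they contribute only $\Oh_\delta(1)$ to $\sum_i n_i$; altogether $\Ex{\sum_i n_i} \leq 2n + \Oh(\delta n) + \Oh_\delta(1)$, and since $\delta$ was arbitrary, $\Ex{\sum_i n_i} \leq 2n + o(n)$. (The matching lower bound is elementary: $\max\os{S-1,n-S}\geq (n-1)/2$ forces $n_i\geq n/2^i - 1$ deterministically, whence $\Ex{\sum_i n_i}\geq 2n - o(n)$.) For the pivot cost, $\Ex{L} = \Oh(\lg n)$ follows from $\Ex{n_i\mathbf{1}_{A_i}}\leq q^i n$ together with the $\Oh_\delta(1)$ tail, while $f\in o(n)$ gives $f(m)\leq\eps m + C_\eps$ for every $\eps$; therefore $\Ex{\sum_i f(n_i)} \leq \eps\,\Ex{\sum_i n_i} + C_\eps\,\Ex{L} = \Oh(\eps n) + \Oh_\eps(\lg n)$, which is $o(n)$ as $\eps$ was arbitrary. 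Plugging both estimates into the displayed inequality completes the proof.

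The step I expect to be the main obstacle is the unrolling: because $n_{i+1}$ is a random variable, one cannot substitute its mean into $T_{\mathrm{part}}$, so the per-level geometric contraction has to be propagated through the outer expectation — which is exactly what the nested-indicator telescoping accomplishes. A secondary nuisance is that $f$ need not be monotone, so there is no clean levelwise comparison; this is handled by running the whole argument ``for every $\delta$ (resp.\ $\eps$)'', letting the threshold $N_\delta$ swallow the $\Oh(1)$-size subproblems and taking the limit only at the very end.
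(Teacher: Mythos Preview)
Your argument is correct but organized differently from the paper's. The paper never unrolls the recursion; instead it fixes $\epsilon>0$, picks $\delta$ with $(2+\epsilon)\delta<\epsilon/4$, and verifies by strong induction on $n$ that $E(n)\leq(2+\epsilon)n+D$. The inductive step splits the sum in \prettyref{eq:gunnar} into the window $\abs{k-\mu}\leq\delta n$ (where $E$ is bounded by $E(\floor{\mu+\delta n})$) and its complement (bounded by $E(n-1)$), and the same concentration fact you invoke---stated and proved as \prettyref{lm:prob_bound}---makes the complement carry weight at most $\epsilon/24$. The pivot-selection cost is absorbed by choosing $n_0$ so that $T_{\mathrm{pivot}}(n)\leq\epsilon n/8$, rather than via a separate $\Ex{\sum_i f(n_i)}$ estimate. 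Your pathwise decomposition is a genuine alternative: it makes the constant $2=1/(1-\tfrac12)$ visible as a geometric series, produces $\Ex{L}\in\Oh(\lg n)$ and the matching lower bound as byproducts, and cleanly separates the partitioning cost from the pivot-selection cost. The paper's induction is shorter and avoids any filtration bookkeeping (no need to argue that the rank distribution of the pivot depends on $n_i$ alone), at the price of being less informative about the level structure.
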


\prettyref{thm:otto} is close to a well-known result in \cite[Thm. 5]{MartinezR01}
on Quickselect, see  \prettyref{cor:quickselect}. Formally speaking we cannot use it directly, 
because we deal with QuickHeapsort, where after partitioning the recursive call is on the larger part. 
Because of that, and for the sake of completeness, we give a proof. 
Moreover, our proof is elementary and simpler than the one in \cite{MartinezR01}. The key step is \prettyref{lm:prob_bound}. Its proof is rather standard and also can be found in 
\prettyref{app:proof}. 

\begin{lemma}\label{lm:prob_bound}
Let $0<\delta < \frac{1}{2}$. If we choose the pivot as median of $2c +1$ elements such that $2c +1 \leq\frac{n}{2}$, then we have $\Pr{\text{pivot }\leq \frac{n}{2} - \delta n} < (2c+1) \alpha^c$ where $\alpha = 4\left(\frac{1}{4} - \delta^2\right)<1$.
\end{lemma}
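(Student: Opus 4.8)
The plan is to turn the statement into a one-sided tail bound for a binomial (or hypergeometric) count and to estimate that tail by hand. Put $m = \floor{\frac{n}{2} - \delta n}$, so that the event ``$\text{pivot}\le \frac{n}{2} - \delta n$'' is exactly the event that the pivot -- the $(c+1)$-st smallest of the $2c+1$ sampled elements -- has rank at most $m$ among the entries of $A[1..n]$ (ties broken by position, which changes no probability). Since the $(c+1)$-st order statistic of the sample has rank $\le m$ if and only if at least $c+1$ of the sampled elements have rank $\le m$, the probability in question equals $\Pr{X\ge c+1}$, where $X$ is the number of the $2c+1$ sampled elements that fall among the $m$ smallest entries of the array. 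If $m\le c$ this probability is $0$, so assume $m\ge c+1$.

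Each individual sampled element has rank $\le m$ with probability $\rho := m/n\le \frac{1}{2} - \delta$, so $X\sim\mathrm{Bin}(2c+1,\rho)$ when the sample is drawn with replacement, while $X$ is hypergeometric with the same mean when it is drawn without replacement. In both cases I would bound $\Pr{X\ge c+1}$ from above by the binomial tail $\sum_{j=c+1}^{2c+1}\binom{2c+1}{j}\,\rho^{\,j}(1-\rho)^{\,2c+1-j}$: for sampling with replacement this is an equality, and for sampling without replacement the upper tail of a hypergeometric is dominated by that of the binomial with the same mean (Hoeffding), or else one reruns the monotonicity argument below directly on the hypergeometric weights $\binom{m}{j}\binom{n-m}{2c+1-j}\big/\binom{n}{2c+1}$, where the hypothesis $2c+1\le \frac{n}{2}$ keeps all intermediate quantities harmless. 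Writing $\bar\rho := 1-\rho$ and using that $t\mapsto t(1-t)$ is increasing on $[0,\frac{1}{2}]$ together with $0\le\rho\le \frac{1}{2}-\delta$, one obtains $4\rho\bar\rho\le 4(\frac{1}{2}-\delta)(\frac{1}{2}+\delta)=\alpha$, which is $<1$ since $0<\delta<\frac{1}{2}$; in particular $\rho<\frac{1}{2}\le\bar\rho$.

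It then remains to finish with an elementary computation. Because $\rho<\bar\rho$, the factor $\rho^{\,j}\bar\rho^{\,2c+1-j}=\bar\rho^{\,2c+1}(\rho/\bar\rho)^{\,j}$ is decreasing in $j$, hence maximal at $j=c+1$ with value $\rho^{\,c+1}\bar\rho^{\,c}$; combined with $\sum_{j=c+1}^{2c+1}\binom{2c+1}{j}=\frac{1}{2}\cdot 2^{2c+1}=2^{2c}$ this yields
\[\Pr{X\ge c+1}\ \le\ \rho^{\,c+1}\bar\rho^{\,c}\cdot 2^{2c}\ =\ \rho\,(4\rho\bar\rho)^{c}\ \le\ \rho\,\alpha^{c}\ <\ \alpha^{c}\ \le\ (2c+1)\,\alpha^{c},\]
which is the asserted bound. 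The only step needing real care is the reduction to the binomial tail in the without-replacement case (Hoeffding's comparison, or the direct hypergeometric form of the monotonicity estimate); everything else is the routine binomial bookkeeping shown above -- which is also why the factor $2c+1$ in the statement is far from tight.
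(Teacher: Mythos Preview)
Your argument is correct for the binomial (with-replacement) model and in fact yields the sharper bound $\alpha^c$ rather than $(2c+1)\alpha^c$; the computation $\sum_{j\ge c+1}\binom{2c+1}{j}\rho^{j}\bar\rho^{\,2c+1-j}\le 2^{2c}\rho^{c+1}\bar\rho^{c}=\rho(4\rho\bar\rho)^{c}\le \rho\,\alpha^{c}$ is clean and efficient.

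The one imprecise step is the passage to sampling without replacement. The statement ``the upper tail of a hypergeometric is dominated by that of the binomial with the same mean (Hoeffding)'' is not what Hoeffding proved, and pointwise tail dominance $\Pr{X\ge t}\le\Pr{Y\ge t}$ is in fact \emph{false} in general (e.g.\ $n=10$, $m=1$, sample size~$2$, $t=1$). What Hoeffding's theorem does give is $\Ex{\phi(X)}\le\Ex{\phi(Y)}$ for every convex~$\phi$, hence MGF dominance. That is enough: a Chernoff bound with $e^{t}=\bar\rho/\rho$ yields $\Pr{X\ge c+1}\le(\rho/\bar\rho)^{c+1}(2\bar\rho)^{2c+1}=2\rho\,(4\rho\bar\rho)^{c}\le 2\rho\,\alpha^{c}<\alpha^{c}$, so the lemma follows. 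Your fallback of ``rerunning the monotonicity argument on the hypergeometric weights'' also works---writing $\Pr{X=j}=\binom{2c+1}{j}w_j$ with $w_j=m^{\underline{j}}(n-m)^{\underline{2c+1-j}}/n^{\underline{2c+1}}$, one checks $w_{j+1}/w_j=(m-j)/(n-m-2c+j)\le 1$ for $j\ge c$ since $m<n/2$, and then bounds $w_{c+1}$ by a pairing of factors---but this is essentially the paper's computation in disguise.

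For comparison, the paper proceeds more directly: it evaluates $\Pr{\text{pivot}=k}=\binom{2c}{c}(2c+1)\,(n-2c)^{-1}\prod_{i=0}^{c-1}\frac{(k-1-i)(n-k-i)}{(n-2i-1)(n-2i)}$, bounds each factor of the product by $\tfrac14-\delta^2$ via a difference-of-squares identity, uses $\binom{2c}{c}\le 4^{c}$ and $n-2c\ge n/2$, and then sums over the at most $n/2$ admissible values of~$k$. This avoids the binomial/hypergeometric comparison altogether and is where the factor $2c+1$ in the statement originates. Your route is more conceptual and gives a tighter constant; the paper's route is a single self-contained calculation tailored to the without-replacement model.
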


\begin{proof}[Proof of \prettyref{thm:otto}.]
As an abbreviation, we let
 $\Ep(n) = \Ex{T_{\mathrm{part}}(n)}$  be the expected  
number of comparisons  performed during partitioning.
We are going to show that for all $\epsilon >0$ there is some $D\in \R$ such that
\begin{align}
E(n) < (2+\epsilon)n + D .
\label{eq:to_show}
\end{align}
So, we fix some $1 \geq \epsilon >0$. 
We choose $\delta > 0$ 
such that $(2+\epsilon) \delta< \frac{\epsilon}{4}.$
Moreover, for this proof let  $\mu = \frac{n+1}{2}$. 
Positions of possible pivots $k$ with 
$\mu - \delta n \leq k \leq \mu + \delta n$ form a small fraction 
of all positions, and they are located around the 
median. Nevertheless, 
applying \prettyref{lm:prob_bound} with $c=f(n)\in \omega(1)\cap o(n)$ yields 
  for all $n$, which are  large enough:
\begin{align}
\Pr{\text{pivot }< \mu - \delta n}\leq (2f(n)+1) \cdot{\alpha}^{f(n)} \leq \frac{1}{48}\epsilon .\label{eq:prob_small}
\end{align}
The analogous inequality holds for 
$\Pr{\text{pivot }> \mu + \delta n}$. Because $T_{\mathrm{pivot}}(n)\in o(n)$, we have
\begin{align}
T_{\mathrm{pivot}}(n)\leq \frac{1}{8}\epsilon n\label{eq:tpiv_small} . 
\end{align}
  for $n$ large enough. Now, we choose $n_0$ such that \prettyref{eq:prob_small} and \prettyref{eq:tpiv_small} hold  for $n\geq n_0$ and such that we have $(2+\epsilon) \delta + \frac{2}{n_0}< \frac{\epsilon}{4}$. We set $D = \Ep(n_0)+1$. Hence for $n < n_0$ the desired result \prettyref{eq:to_show} holds. Now, let $n\geq n_0$.
From \prettyref{eq:gunnar} we  obtain by symmetry:
\begin{align*}
 \Ep(n) &\leq n - 1 + T_{\mathrm{pivot}}(n) 
+ \sum_{k=\ceil{\mu - \delta n}}^{\floor{\mu+\delta n}} \Pr{\text{pivot }=k} \cdot \Ep(k-1)\\
&\qquad\qquad\qquad\qquad\hspace{2.4mm} + 2 \sum_{k=\floor{\mu+\delta n}+1}^{n}\Pr{\text{pivot }=k}\cdot \Ep(k-1) .\\
\intertext{Since $E$ is monotone, $E(k)$ can be bounded by the highest value in the respective interval:}
 &\leq n +\frac{1}{8}\epsilon n 
 + \Pr{\mu-\delta n\leq \text{pivot }\leq \mu+\delta n} \cdot\Ep\left(\floor{\mu+\delta n}\right)
 \\ &\qquad\qquad\hspace{3.18mm} +
2 \Pr{\text{pivot }> \mu+\delta n}
\cdot \Ep(n-1) 
\\
&\leq n  +\frac{1}{8}\epsilon n+ \left(1-\frac{1}{24}\epsilon\right) \cdot \Ep\left(\floor{\mu+\delta n}\right)  + 2 \frac{1}{48}\epsilon
\cdot \Ep(n-1) .\nonumber \\
\intertext{By induction we assume $E(k) \leq (2+\epsilon)k + D$ for $k<n$. Hence:}
\Ep(n) &\leq n  + \frac{1}{8}\epsilon n + \left(1-\frac{1}{24}\epsilon\right)\cdot \left((2+\epsilon)\cdot\left(\mu+\delta n\right)+D\right)
 + \frac{1}{24}\epsilon\cdot((2+\epsilon)n+D)\\
 &\leq n  +  (2 + \epsilon) \cdot \left(\frac{n+1}{2}+\delta n\right)
 + \frac{1}{8}\epsilon n + \frac{1}{24}\epsilon(2+\epsilon)n + D\\
& \leq 2n +1 +\frac{\epsilon}{2}+ (2 +\epsilon)\delta n + \frac{3}{4}  \epsilon n + D 
\quad < \quad (2+\epsilon)n + D .
\end{align*}
\end{proof}

\begin{corollary}[\cite{MartinezR01}]\label{cor:quickselect}
Let $f\in \omega(1)\cap o(n)$ with $1\leq f(n) \leq n$. When implementing Quickselect with the median of $f(n)$ randomly selected elements as pivot, the expected number of comparisons  is $2n + o(n)$.
\end{corollary}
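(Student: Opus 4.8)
The plan is to derive the corollary directly from the proof of \prettyref{thm:otto}, the point being that Quickselect with the median-of-$f(n)$ pivot rule never performs more comparisons than the partitioning phase of QuickHeapsort studied there. After a partition step places the pivot at position $k$, QuickHeapsort recurses on the block of size $\max\os{k-1,n-k}$, whereas Quickselect either stops (if the pivot is the sought element) or recurses on the block that still contains it — and that block has size at most $\max\os{k-1,n-k}$.

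I would make this precise as follows. Fix the input array, and for a target rank $j$ let $E_j(n)$ be the expected number of comparisons of Quickselect (the expectation being over the random pivot choices); set $E(n)=\max_{1\le j\le n}E_j(n)$. A routine induction shows $E$ is nondecreasing. The first partition costs $n-1$ comparisons plus at most $T_{\mathrm{pivot}}(n)\in o(n)$ to pick the pivot; conditioning on the pivot position $k$ and bounding the surviving block size by $\max\os{k-1,n-k}$ (using monotonicity of $E$), we obtain
\[E(n)\ \le\ n-1+T_{\mathrm{pivot}}(n)+\sum_{k=1}^{n}\Pr{\text{pivot }=k}\cdot E\parenth{\max\os{k-1,n-k}},\]
which is precisely the recurrence \prettyref{eq:gunnar}. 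Since the pivot is the median of $f(n)$ uniformly random elements, its position has a symmetric distribution and \prettyref{lm:prob_bound} applies; together with $T_{\mathrm{pivot}}(n)\in o(n)$ and monotonicity of $E$ this is everything the proof of \prettyref{thm:otto} uses, so that proof goes through verbatim with $E(n)$ in place of $\Ex{T_{\mathrm{part}}(n)}$ and yields $E(n)\le 2n+o(n)$.

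For the matching lower bound I would take the hardest target rank, the median. By \prettyref{lm:prob_bound} applied on both sides of the centre, for any \emph{fixed} integer $L$ the following event has probability $1-o(1)$: in each of the first $L$ rounds the pivot lands within $\delta n$ of the middle of the current block (and is not itself the target), so the block passed on shrinks by a factor of at most $(\tfrac12-\delta)^{-1}$. As a partition of a block of size $m$ uses $m-1$ comparisons, on this event the total is at least $\sum_{i=0}^{L-1}(\tfrac12-\delta)^i\,n-L$; choosing $\delta$ small and $L$ large (both independent of $n$) this exceeds $(2-\epsilon)n$ for any prescribed $\epsilon>0$ once $n$ is large, so $E(n)\ge 2n-o(n)$. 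Hence $E(n)=2n+o(n)$.

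I do not expect a genuine obstacle: the substance is entirely contained in \prettyref{thm:otto} and \prettyref{lm:prob_bound}, and the remaining work is the bookkeeping of the reduction — checking that $E$ is monotone and that replacing the true size of the recursive subproblem by $\max\os{k-1,n-k}$ can only increase the expected number of comparisons, so that \prettyref{eq:gunnar} is a legitimate upper-bound recurrence for Quickselect as well.
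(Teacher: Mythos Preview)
Your upper-bound argument is exactly the paper's: the paper's entire proof is the single observation that QuickHeapsort's partitioning always recurses on the \emph{larger} side, so its partitioning cost dominates that of Quickselect, whence \prettyref{thm:otto} gives the $2n+o(n)$ upper bound. You reproduce this domination by bounding the Quickselect subproblem size by $\max\os{k-1,n-k}$ and feeding it into \prettyref{eq:gunnar}; the added care about monotonicity of $E$ is the natural way to make the one-line argument rigorous.

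Where you differ is that you also supply a matching lower bound. The paper does not attempt this: the corollary is attributed to \cite{MartinezR01}, and the paper is content to show that the upper bound drops out of \prettyref{thm:otto}. Your lower-bound sketch via \prettyref{lm:prob_bound} (pivot concentrated near the middle in each of a fixed number $L$ of rounds, so block sizes decay no faster than geometrically with ratio $\tfrac12-\delta$) is sound and is a genuine addition over what the paper proves. One small point to tidy when you write it out: after the first round the target rank need not sit near the middle of the current block, but your argument only uses the \emph{size} of the surviving block, not the position of the target within it, so this causes no trouble.
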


\begin{proof}
In QuickHeapsort the recursion is always on the larger part of the array. Hence, the number of comparisons in partitioning for QuickHeapsort is an upper bound on the number of comparisons in Quickselect.
\end{proof}

In \cite{MartinezR01} it is also proved that choosing the pivot as median of $\mathcal{O}(\sqrt{n})$ elements is optimal for  Quicksort as well as for  Quickselect. This suggests that we choose the same value in QuickHeapsort; what is backed by our experiments.

\section{Modifications of QuickHeapsort Using Extra-space}\label{sec:modifications}
In this section we want to describe some modification of QuickHeapsort using $n$ bits of extra storage. 
We introduce two bit-arrays. In one of them (the CompareArray)~-- which is actually two bits per element~-- we store the comparisons already done (we need two bits, because there are three possible values~-- right, left, unknown~-- we have to store). 
 In the other one (the RedGreenArray) we store which element is red and which is green. 

Since the heaps have maximum size $n/2$, the RedGreenArray only requires $n/2$ bits. The CompareArray is only needed for the inner nodes of the heaps, i.e. length $n/4$ is sufficient. Totally this sums up to $n$ extra bits.

For the heap construction we do not use the algorithms described in \prettyref {sec:heap_con}. With the CompareArray we can do better by using the algorithm of McDiarmid and Reed \cite{McDiarmidR89}. 
The heap construction works similarly to Bottom-Up-Heapsort, i.e. the array is traversed backward calling for all inner positions $i$ the Reheap procedure on $i$. The Reheap procedure takes the subheap with root $i$ and restores the heap condition, if it is violated at the position $i$.
First, the Reheap procedure determines a \emph{special leaf} using the SpecialLeaf procedure as described in \prettyref{sec:QuickHeapsort}, but without moving the elements. Then, the final position of the former root is determined going upward from the special leaf (bottom-up-phase). In the end, the elements above this final position are moved up towards the root by one position. 
That means that all but one element which are compared during the bottom-up-phase, stay in their places. Since in the SpecialLeaf procedure these elements have been compared with their siblings, these comparisons can be stored in the CompareArray and can be used later.

 With another improvement concerning the construction of heaps with seven elements as in \cite{CarlssonCM94} the benefits of this array can be exploited even more.

The RedGreenArray is used during the sorting phase, only. Its functionality is straightforward: Every time a red element is inserted into the heap, the corresponding bit is set to red. The SpecialLeaf procedure can stop as soon as it reaches an element without green children. Whenever a red and a green element have to be compared, the comparison can be skipped.

\begin{theorem}\label{thm:exptimeModified}
Let $f\in \omega(1)\cap o(n)$ with $1\leq f(n) \leq n$, e.g., 
$f(n) = \lg n$, and 
let $\ET{n}$ be the expected number of comparisons
by QuickHeapsort using the CompareArray with the improvement of  \cite{CarlssonCM94} and the RedGreenArray on a fixed  input array of size $n$. 
Choosing the pivot as median of $f(n)$ randomly selected elements 
in time $\Oh(f(n))$, we have
\begin{align*}
\ET{n}&\leq n\lg n -0.997 n+ o (n) .
\end{align*}
\end{theorem}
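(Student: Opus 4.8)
The plan is to re-run the three-part accounting of \prettyref{thm:exptime}, but with the extraction and construction bounds replaced by the improved bounds that the CompareArray and RedGreenArray make available, and with the partitioning contribution taken from \prettyref{thm:otto}. Write $\ET{n} \leq T_{\mathrm{con}}(n) + T_{\mathrm{ext}}(n) + \Ex{T_{\mathrm{part}}(n)}$ as before. For the partitioning term, \prettyref{thm:otto} gives $\Ex{T_{\mathrm{part}}(n)} \leq 2n + o(n)$ when the pivot is the median of $f(n) = \lg n$ elements. The new content is in bounding $T_{\mathrm{con}}(n)$ and $T_{\mathrm{ext}}(n)$ using the extra bits.

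For the extraction phase, the RedGreenArray lets a SpecialLeaf call stop as soon as it meets a node without green children, so the path it traverses ends at the boundary between the green and red region rather than at a leaf. The key observation, exactly as in the proof of \prettyref{thm:heapext}, is that each extraction colours exactly one new node $\pp$ red, and the number of comparisons to do so is at most $\hei(\pp) + \lev(\pp)$; but now $\pp$ is on the green/red frontier, and once the heap has been half-emptied this frontier has moved up, so the relevant quantity is bounded by the height of the still-green part. I would argue that the saving is asymptotically a further $n$ comparisons on top of the bound of \prettyref{thm:heapext}, giving $T_{\mathrm{ext}}(n) \leq n(\floor{\lg n}-3) + 2\{n\} - n + \Oh(\lg^2 n)$, and then, using $F(n) - 2n \leq n\lg n - 1.9139n$ from \cite[Thm.~1]{Wegener91} exactly as in \prettyref{cor:T_ext}, conclude $T_{\mathrm{ext}}(n) \leq n\lg n - 3.9139n + \Oh(\lg^2 n)$. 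This is the step I expect to be the main obstacle: making the "frontier moves up" heuristic into a rigorous per-heap bound, and then re-summing it over the sequence $H_1,\dots,H_t$ of heaps via \prettyref{lm:fred} and \prettyref{lm:F_sum} so that the extra $-n$ survives the summation (since $\sum m_i = n$, a constant saving per element is safe).

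For the construction phase, I would replace \prettyref{prop:chen} by the McDiarmid–Reed bound: the Reheap procedure runs SpecialLeaf without moving elements, records all sibling comparisons in the CompareArray, and does a bottom-up search for the final position, so the comparisons performed during SpecialLeaf are reused in all later Reheap calls up the tree. With the additional special handling of seven-element subheaps as in \cite{CarlssonCM94}, the per-element cost of constructing all heaps drops; I would quote the resulting constant $\kappa$ with $T_{\mathrm{con}}(n) \leq \kappa n + o(n)$ (and since every element lies in a heap exactly once, the sizes sum to $n$, so this is immediate from the single-heap bound). Finally I would add the three contributions:
\begin{align*}
\ET{n} &\leq \kappa n + \bigl(n\lg n - 3.9139\,n\bigr) + 2n + o(n) = n\lg n + (\kappa - 1.9139)\,n + o(n),
\end{align*}
and check that the construction constant $\kappa$ from \cite{McDiarmidR89,CarlssonCM94} satisfies $\kappa - 1.9139 \leq -0.997$, i.e. $\kappa \leq 0.9169$; this is the numerical sanity check that closes the proof. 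If the extraction saving turns out to be only $cn$ for some $c<1$, the same bookkeeping still goes through with the constant adjusted accordingly, so the structure of the argument is robust even if the precise constants need care.
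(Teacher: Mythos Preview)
Your three-part skeleton is right, and the partitioning term via \prettyref{thm:otto} is fine. But the two substantive pieces both have gaps that prevent the arithmetic from closing.

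\textbf{CompareArray.} You treat the CompareArray as a device that lowers the construction constant, and then ask whether the resulting $\kappa$ satisfies $\kappa\leq 0.9169$. It does not: no heap construction reaches anything near that (Chen et al.\ give $1.625m$, and the Carlsson--Chen--Mattsson Fine-Heap construction actually costs $\tfrac{23}{12}m$). The paper's mechanism is different: the CompareArray stores sibling comparisons made during construction and \emph{carries them over into the extraction phase}, where every stored comparison is eventually consumed by a SpecialLeaf call. So the right quantity is an \emph{amortized} construction cost, namely (comparisons to build the Fine-Heap) minus (comparisons left in the array at the end), which is $\tfrac{23}{12}m-\tfrac{m}{2}=\tfrac{17}{12}m$. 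Summing over all heaps gives $T_{\mathrm{con}}^{\mathrm{amort}}(n)\leq \tfrac{17}{12}n+o(n)$; this is \prettyref{prop:t_con_amort}. Your ``reused in all later Reheap calls'' is reuse \emph{within} construction; the crucial reuse is between construction and extraction.

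\textbf{RedGreenArray.} Your ``frontier moves up'' argument is exactly the paper's red--red saving and indeed gives $n+\Oh(\lg^2 n)$. But you miss a second, independent saving: whenever SpecialLeaf stands at a node with one green child and one red child, the comparison is skipped. This happens at least once for every inner node (the moment it loses its last green child), contributing another $\tfrac{n}{2}+\Oh(\lg n)$. With both savings the extraction term drops by $\tfrac{3n}{2}$, not $n$.

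Putting these together the paper computes
\[
\ET{n}\;\leq\;\tfrac{17}{12}n \;+\; \bigl(n\lfloor\lg n\rfloor-3n+2\{n\}\bigr)\;+\;2n\;-\;n\;-\;\tfrac{n}{2}\;+\;o(n),
\]
and then applies $n\lfloor\lg n\rfloor+2\{n\}\leq n\lg n+0.0861n$ (from \cite[Thm.~1]{Wegener91}) to get $n\lg n-0.997n+o(n)$. With your accounting ($\kappa n$ for construction, only $-n$ from the array) you would need $\kappa\leq 0.9169$, which is unattainable; the missing $\tfrac{n}{2}$ from the red--green case and the amortization of construction against extraction are precisely what make the constants meet.
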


\begin{proof}
We can analyze the savings by the two arrays separately, because the CompareArray only affects comparisons between two green elements, while the RedGreenArray only affects comparisons involving at least one red element. 

First, we consider the heap construction using the CompareArray. With this array we obtain the same worst case bound as for the standard heap construction method. However, the CompareArray has the advantage that at the end of the heap construction many comparisons are stored in the array and can be reused for the extraction phase. More precisely: For every comparison except the first one made when going upward from the special leaf, one comparison is stored in the CompareArray, since for every additional comparison one element on the path defined by SpecialLeaf stays at its place. Because every pair of siblings has to be compared at one point during the heap construction or extraction, all these stored comparisons can be reused. Hence, we only have to count the comparisons in the SpecialLeaf procedure during the construction plus $\frac{n}{2}$ for the first comparison when going upward. Thus, we get an amortized bound for the comparisons during construction of $\frac{3n}{2}$. 

In \cite{CarlssonCM94} the notion of \emph{Fine-Heaps} is introduced. A Fine Heap is a heap with the additional CompareArray such that for every node the larger child is stored in the array. Such a Fine-Heap of size $m$ can be constructed using the above method with $2m$ comparisons. In \cite{CarlssonCM94} Carlsson, Chen and Mattsson showed that a Fine-Heap of size $m$ actually can be constructed with only $\frac{23}{12}m+\mathcal{O}(\lg^2m)$ comparisons. That means we have to invest  $\frac{23}{12}m+\mathcal{O}(\lg^2m)$ for the heap construction and at the end there are $\frac{m}{2}$ comparisons stored in the array. All these comparisons stored in the array are used later. Summing up over all heaps during an execution of QuickHeapsort, we can save another $\frac{1}{12}n$ comparisons additionally to the comparisons saved by the CompareArray with the result of \cite{CarlssonCM94}.
Hence, for the amortized cost of the heap construction $T_{\mathrm{con}}^{\mathrm{amort}}$ (i.e. the number of comparisons needed to build the heap minus the number of comparisons stored in the CompareArray after the construction which all can be reused later) we have obtained:
\begin{proposition}\label{prop:t_con_amort}
$T_{\mathrm{con}}^{\mathrm{amort}}(n) \leq  \frac{17}{12} n + o(n)$.
\end{proposition}
This bound is slightly better than the average case for the heap construction with the algorithm of \cite{McDiarmidR89} which is $1.52n$.

Now, we want to count the number of comparisons we save using the RedGreenArray. We distinguish the two cases that two red elements are compared and that a red and a green element are compared. 
Every position in the heap has to turn red at one point. At that time, all nodes below this position are already red. Hence, for that element we save as many comparisons as the element is above the bottom level. Summing over all levels of a heap of size $m$ the saving results in 
$ \approx \frac{m}{4} \cdot 1 + \frac{m}{8} \cdot 2 + \cdots = m \cdot \sum\limits_{i\geq 1}i2^{-i-1}= m .$
This estimate is exact up to $\Oh(\lg m)$-terms. Since the expected number of heaps is $\Oh(\lg n)$, we obtain for the overall saving 
the value 
$T_{\mathrm{saveRR}}(n) = n + \Oh(\lg^2 n).$

Another place where we save comparisons with the RedGreenArray is when a red element is compared with a green element.
 It occurs at least one time~-- when the node looses its last green child~-- for every inner node that we compare a red child with a green child. Hence, we save at least as many comparisons as there are inner nodes with two children, i.e. at least $\frac{m}{2} - 1$.
Since every element~-- except the expected $\Oh(\lg n)$ pivot elements~-- is part of a heap exactly once, we save at least
$T_{\mathrm{saveRG}}(n) \geq \frac{n}{2} + \Oh(\lg n)$
comparisons when comparing  green with  red elements. In the average case the saving might be even slightly higher, since comparisons can also be saved when a node does not loose its last green child.

Summing up all our savings and using the median of $f(n)\in \omega(1)\cap o(n)$  as pivot we obtain the proof of \prettyref{thm:exptimeModified}:
\begin{align*}
\ET{n} &\leq T_{\mathrm{con}}^{\mathrm{amort}}(n)
+T_{\mathrm{ext}}(n) +{\mathbb{E}}[T_{\mathrm{part}}(n)] -T_{\mathrm{saveRR}}(n)-T_{\mathrm{saveRG}}(n)\\
	&\leq  \frac{17}{12}n + n\cdot(\lfloor\lg n\rfloor-3) + 2 \os n +  2n -\frac{3n}{2}+ o(n)\\
	&\leq n\lg n -0.997 n+ o (n) .
\end{align*}
\end{proof}
\section{Experimental Results and Conclusion}\label{sec:experiments}



 In \prettyref{fig:different_QHS_versions} we present the number of comparisons of the different versions of QuickHeapsort we considered in this paper, i.e. the basic version, the improved variant of \prettyref{sec:QuickHeapsort}, and the version using bit-arrays (however, without the modification by \cite{CarlssonCM94}) for different values of $n$. We compare them with Quicksort, Ultimate Heapsort, Bottom-Up-Heapsort and MDR-Heapsort.  All algorithms are implemented with median of $\sqrt{n}$ elements as pivot (for Quicksort we show additionally the data with median of 3).
For the heap construction we implemented the normal algorithm due to Floyd \cite{Floyd64} as well as the algorithm using the extra bit-array (which is the same as in MDR-Heapsort). 

\begin{figure}[ht]
\begin{center}
\begin{tikzpicture}
	\begin{axis}[domain=800 : 4161888,xmode= log, xlabel={$n$},ylabel={$(\text{\#comparisons} - n\lg n)/n$},width=0.8\textwidth, legend pos= north east ]
 		\addplot[densely dashed,every mark/.append style={solid,fill=black},mark=square*]  table[x=n,y=QS3] {vergleiche5.csv};
		\addlegendentry{\scriptsize{ Quicksort with Median of 3}}
            	\addplot[mark=diamond*,every mark/.append style={solid,fill=black}]  table[x=n,y=QSS] {vergleiche5.csv};
		\addlegendentry{\scriptsize{Quicksort with Median of $\sqrt{n}$}}
		\addplot[mark=triangle*,every mark/.append style={solid,fill=black}, densely dashed]  table[x=n,y=bQHSS] {vergleiche5.csv};
		\addlegendentry{\scriptsize{Basic QuickHeapsort}}
		\addplot[every mark/.append style={solid,fill=black},mark=*]  table[x=n,y=iQHSS] {vergleiche5.csv};
		\addlegendentry{\scriptsize{Improved QuickHeapsort}}
		\addplot[ mark=triangle*,every mark/.append style={solid,fill=black}]  table[x=n,y=QHSbaS] {vergleiche5.csv};
		\addlegendentry{\scriptsize{QuickHeapsort with bit-arrays}}
		\addplot[densely dashed,every mark/.append style={solid,fill=black} ,mark=*]  table[x=n,y=MDR] {vergleiche5.csv};
		\addlegendentry{\scriptsize{MDR-Heapsort}}
		\addplot[mark=triangle*,every mark/.append style={solid,fill=black}]  table[x=n,y=Ultimate] {vergleiche5.csv};
		\addlegendentry{\scriptsize{Ultimate-Heapsort}}
 		\addplot[dashed]  {- 1.44};
 		\addlegendentry{\scriptsize{Lower Bound}}
       \end{axis}
\end{tikzpicture} 
\end{center}
\caption{Average number of comparisons of QuickHeapsort implemented with median of $\sqrt{n}$ compared with other algorithms}\label{fig:different_QHS_versions}
\end{figure}
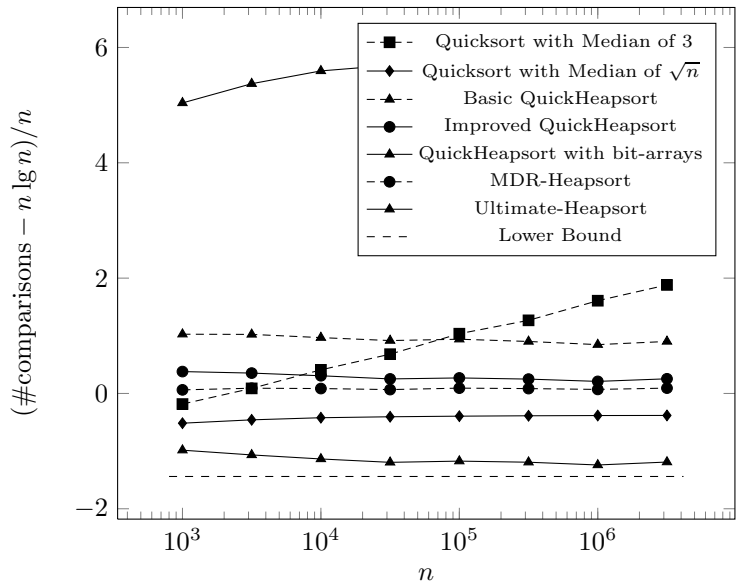

More results with other pivot selection strategies are in \prettyref{tab:different_QHS_versions} and \prettyref{tab:pivot_selection} in \prettyref{app:tables} confirming that a sample size of $\sqrt{n}$ is optimal for pivot selection with respect to the number of comparisons and also that the $o(n)$-terms in  \prettyref{thm:exptime} and \prettyref{thm:otto} are not too big.
In \prettyref{tab:running_times} in \prettyref{app:tables} we present actual running times of the different algorithms for $n=1000000$. 
 All the numbers, except the running times, are average values over 100 runs with random data. As our theoretical estimates predict, QuickHeapsort with bit-arrays beats all other variants including Relaxed-Weak-Heapsort (see \prettyref{tab:different_QHS_versions},  \prettyref{app:tables}) when implemented with median of $\sqrt{n}$ for pivot selection.
It also performs $326728 \approx 0.33 \cdot 10^6$ comparisons less than our theoretical predictions which are $10^6\cdot\lg (10^6 )- 0.9139 \cdot 10^6 \approx 19017569$ comparisons.

In this paper we have shown that with known techniques QuickHeapsort can be implemented with expected number of comparisons less than $n\lg n - 0.03n + o(n)$ and extra storage $O(1)$. On the other hand, using $n$ extra bits we can improve this to $n\lg n - 0.997n + o(n)$, i.e. we showed that QuickHeapsort can compete with the most advanced Heapsort variants. These theoretical estimates were also confirmed by our experiments. 
We also considered different pivot selection schemes. 
For any constant size sample for pivot selection, QuickHeapsort beats Quicksort for large $n$, since Quicksort has a expected running time of $\approx Cn\lg n$ with $C>1$. However, when choosing the pivot as median of $\sqrt{n}$ elements (i.e. with the optimal strategy) then our experiments show that  Quicksort needs less comparisons than QuickHeapsort. However,
using bit-arrays QuickHeapsort is the winner, again.
In order to make the last statement rigorous, better theoretical 
bounds for Quicksort with sampling $\sqrt{n}$ elements are needed.
For future work it would also be of interest to prove the optimality of $\sqrt{n}$ elements for pivot selection in QuickHeapsort, to estimate the lower order terms of the average running time of QuickHeapsort and also to find an exact average case analysis for the saving by the bit-arrays.

\subsubsection*{Acknowledgements.}
We thank Martin Dietzfelbinger, Stefan Edelkamp and Jyrki Katajainen for their helpful comments.
We thank Simon Paridon for implementing the algorithms for our experiments.

\newcommand{\Ju}{Ju}\newcommand{\Ph}{Ph}\newcommand{\Th}{Th}\newcommand{\Ch}{Ch}\newcommand{\Yu}{Yu}\newcommand{\Zh}{Zh}

\newpage
\section*{APPENDIX}
\appendix
\section{Proofs}\label{app:proof}

\begin{proof}[Proof of \prettyref{lm:F_ineq}]
Since the right derivative is monotonically increasing we have:
\begin{align*}
F(x+\delta)-F(x) = \int_x^{x+\delta} F'(t) \,\mathrm dt \geq F'(x)\cdot \delta = (\floor{\lg x}+2)\delta
\end{align*}
and
\begin{align*}
F(y)-F(y-\delta)= \int_{y-\delta}^y F'(t) \,\mathrm dt \leq F'(y)\cdot \delta = (\floor{\lg y}+2)\delta .
\end{align*}
This yields:
\begin{align*}
F(y)-F(y-\delta) &\leq  (\floor{\lg y}+2)\delta
	\leq  (\floor{\lg x}+2)\delta
	\leq F(x+\delta)-F(x) .
\end{align*}
By adding $F(x)+ F(y-\delta)$ on both sides we obtain the first claim of \prettyref{lm:F_ineq}. 
Note that $\lim_{\eps \to 0}F(\eps) =0$. Hence 
the second claim follows from the first by considering the limit $\delta \to y$. 
\end{proof}

\begin{proof}[Proof of \prettyref{lm:prob_bound}.]
First note that the probability for choosing the $k$-th element as pivot satisfies 
\[\binom{n}{2c+1}\cdot \Pr{\text{pivot }= k } =  \binom{k-1}{c}\binom{n-k}{c}{} .\]
We use the notation of \emph{falling factorial} $x^{\underline \ell}
= x \cdots (x-\ell +1)$. Thus, $\binom x \ell = \frac{x^{\underline \ell}}{\ell!}$. 

\begin{align*}
\Pr{\text{pivot }= k } &= 
\dfrac{(2c+1)!\cdot(k-1)^{\underline c}\cdot(n-k)^{\underline c}}{(c!)^2 \cdot n^{\underline {2c +1}}}\\
&= \binom{2c}{c}(2c+1) \frac{1}{(n-2c)} \prod_{i=0}^{c-1}\frac{(k-1-i)(n-k-i)}{(n-2i-1) (n-2i)} .
\end{align*}
For  $k \leq c$ we have  $\Pr{\text{pivot }= k } =0$. So, let  $c < k \leq  \frac{n}{2} - \delta n$ and let us consider an index $i$ in the product with  $0\leq i< c$.

\begin{align*}
\frac{(k-1-i)(n-k-i)}{(n-2i-1) (n-2i)} &\leq \frac{(k-i)(n-k-i)}{(n-2i) (n-2i)}\\
  &= \frac{\left(\left(\frac{n}{2}-i\right)-\left(\frac{n}{2}-k\right)\right) \cdot \left(\left(\frac{n}{2}-i\right)+\left(\frac{n}{2}-k\right)\right) }{\left(n-2i\right)^2}\\
&= \frac{\left(\frac{n}{2}-i\right)^2-\left(\frac{n}{2}-k\right)^2}{\left(n-2i\right)^2}\\
&\leq \frac{1}{4}- \frac{\left(\frac{n}{2}-\left(\frac{n}{2}-\delta n\right)\right)^2}{n^2}  
= \frac{1}{4}-\delta^2 .
\end{align*}
We have  $\binom{2c}{c}\leq 4^c$. Since $2c +1 \leq\frac{n}{2}$, we obtain: 
\begin{align*}
\Pr{\text{pivot }= k } 
&\leq 4^c(2c+1) \frac{1}{(n-2c)}\left(\frac{1}{4}-\delta^2\right)^c
< (2c+1) \frac{2}{n}\alpha^c .
\end{align*}
Now, we obtain the desired result. 
\begin{align*}
\Pr{\text{pivot }\leq \frac{n}{2} - \delta n}
&< \sum_{k=0}^{\floor{\frac{n}{2} - \delta n}} (2c+1) \frac{2}{n}\alpha^c 
\,\leq\, (2c+1) \alpha^c 
\end{align*}
\end{proof}

\section{More Experimental Results}\label{app:tables}

In \prettyref{tab:running_times} we present actual running times of the different algorithms for $n=1000000$ with two different comparison functions (the numbers displayed here are averages over 10 runs with random data). 
 One of them is the normal integer comparison, the other one first applies four times the logarithm to both operands before comparing them. Like in \cite{EdelkampS02}, this simulates expensive comparisons.

 In \prettyref{tab:different_QHS_versions} all algorithms are implemented with median of $3$ and with median of $\sqrt{n}$ elements as pivot. %
 We compare them with Quicksort implemented with the same pivot selection strategies, Ultimate Heapsort, Bottom-Up-Heapsort and MDR-Heapsort. In \prettyref{tab:different_QHS_versions} we also added the values for Relaxed-Weak-Heapsort which were presented in \cite{EdelkampS02}.

\begin{table}[ht]
\caption{Running times for QuickHeapsort and other algorithms tested on $10^6$, average over 10 runs elements}\label{tab:running_times}
\begin{center}
\begin{tabular}{|l|l|l|}\hline
{\bf Sorting algorithm} &\parbox{18.5mm} {integer data time $[s]$}& \parbox{27.5mm} {$\log^{(4)}$-test-function time $[s]$} \\\hline
Basic QuickHeapsort, median of 3 & ~ 0.1154 &~ 4.21 \\\hline
Basic QuickHeapsort, median of $\sqrt n$ &~ 0.1171 &~ 4.109\\\hline
Improved QHS, median of 3 &~ 0.1073 &~ 4.049\\\hline
Improved QHS, median of $\sqrt n$ &~ 0.1118 &~ 3.911 \\\hline
QHS with bit-arrays, median of $3$ &~ 0.1581 &~ 3.756 \\\hline
QHS with bit-arrays, median of $\sqrt n$ &~ 0.164 &~ 3.7 \\\hline
Quicksort with median of 3  &~ 0.1181 &~ 3.946\\\hline
Quicksort with median of $\sqrt n$ &~ 0.1316 &~ 3.648\\\hline
 Ultimate Heapsort &~ 0.135 &~ 5.109\\\hline
 Bottom-Up-Heapsort &~ 0.1677 &~ 4.132\\\hline
 MDR-Heapsort &~ 0.2596 &~ 4.129\\\hline
\end{tabular}
\end{center}
\end{table}

\begin{table}[ht]
\caption{QuickHeapsort and other algorithms tested on $10^6$ elements  (the data for Relaxed-Weak-Heapsort is taken from \cite{EdelkampS02}). }\label{tab:different_QHS_versions}
\begin{center}
\begin{tabular}{|l|l|}\hline
{\bf Sorting algorithm} &\parbox{40.5mm}{Average number of comparisons for $n=10^6$}\\\hline 
Basic QuickHeapsort with median of 3 & 21327478\\\hline
Basic QuickHeapsort with median of $\sqrt n$ & 20783631\\\hline
Improved QuickHeapsort, median of 3 & 20639046\\\hline
Improved QuickHeapsort, median of $\sqrt n$ & 20135688\\\hline
QuickHeapsort with bit-arrays, median of $3$ & 19207289\\\hline
QuickHeapsort with bit-arrays, median of $\sqrt n$ & 18690841 $\qquad *\text{Best result}* $\\\hline
Quicksort with median of 3 & 21491310\\\hline
Quicksort with median of $\sqrt n$ & 19548149\\\hline
Bottom-Up-Heapsort & 20294866\\\hline
MDR-Heapsort & 20001084\\\hline
Relaxed-Weak-Heapsort & 18951425\\\hline\hline
{\bf Lower Bound:} $\lg{n!}$ & 18488884 $ \approx \lg{(10^{6}!)}$\\\hline
\end{tabular}
\end{center}
\end{table}

We also compare the different pivot selection strategies on the basic QuickHeapsort with no modifications. 
 We test sample of sizes of  one, three, approximately $\lg n$, $\sqrt[4]{n}$,$\sqrt{n/\lg n}$,  $\sqrt{n}$, and $ n^{\frac{3}{4}}$ for the pivot selection. 

 In \prettyref{tab:pivot_selection} the average number of comparisons and the standard deviations 
are listed. We ran the algorithms on arrays of length 10000 and one million. The displayed data is the average resp.\ standard deviation of 100 runs of QuickHeapsort with the respective pivot selection strategy.

These results are not very surprising: The larger the samples get, the smaller is the standard deviation. The average number of comparisons reaches its minimum with a sample size of approximately $\sqrt{n}$ elements. One notices that the difference for the average number of comparisons is relatively small, especially between the different pivot selection strategies with non-constant sample sizes. This confirms experimentally that the $o(n)$-terms in  \prettyref{thm:exptime} and \prettyref{thm:otto} are not too big.
\begin{table}[hbt]
\caption{Different strategies for pivot selection for basic QuickHeapsort tested on $10^4$ and $10^6$ elements. The standard deviation of our experiments is given in percent of the average number of comparisons. }\label{tab:pivot_selection}
{%
\newcommand{\mc}[3]{\multicolumn{#1}{#2}{#3}}
\begin{center}
\begin{tabular}{|l|l|l|l|l|}\hline
$n$ & \mc{2}{l|}{$10^4$} & \mc{2}{l|}{$10^6$}\\\hline
Sample size &\parbox{26mm}{Average number of comparisons} &\parbox{15mm}{Standard deviation}   &\parbox{26mm}{Average number of comparisons} &\parbox{15mm}{Standard deviation}\\\hline
$1$ & 152573 & 4.281 & 21975912 & 3.452\\\hline
$3$ & 146485 & 2.169 & 21327478 & 1.494\\\hline
$\sim \lg n$ & 143669 & 0.954 & 20945889 & 0.525\\\hline
$\sim \sqrt[4]{n}$ & 143620 & 0.857 & 20880430 & 0.352\\\hline
$\sim \sqrt{n/\lg n}$ & 142634 & 0.413 & 20795986 & 0.315\\\hline
$\sim \sqrt{n}$ & 142642 & 0.305 & 20783631 & 0.281\\\hline
$\sim n^{\frac{3}{4}}$ & 147134 & 0.195 & 20914822 & 0.168\\\hline
\end{tabular}
\end{center}
}
\end{table}

\section{Some Words about the Worst Case Running Time} 

Obviously the worst case running time depends on how the pivot element is chosen. If just one random element is used as pivot we get the same quadratic worst case running time as for Quicksort. However the probability that in QuickHeapsort we run in such a ``bad case'' is not higher than in Quicksort, since any choice of pivot elements leading to a worst case scenario in QuickHeapsort also yields the worst case for Quicksort. 

If we choose the pivot element as median of approximately $2\lg n$ elements, we get a worst case running time of $\mathcal{O}\left(\frac{n^2}{\lg n}\right)$,
i.e. for the worst case it makes almost no difference, if the pivot is selected as median of $2\lg n$ or just as one random element.

However, if we use  approximately $\frac{n}{\lg n}$ elements as sample for the pivot selection, we can get a better bound on the worst case.

Let $f:\N\to \N_{\geq 1}$ be some monotonically growing function with $f\in o(n)$
 (e.g.\ $f(n) = \lg n$).  We can apply the ideas of the Median of Medians algorithm \cite{BlumFPRT73}: First we choose  $\frac{n}{f(n)}$ random elements, then we group them into groups of five elements each. The median of each group can be determined with six comparisons \cite[p. 215]{Knuth}. Now, the median of these medians can be computed using Quickselect. We assume that Quickselect is implemented with the same strategy for pivot selection. That means we get the same recurrence relations for the worst case complexity of the partitioning-phases in QuickHeapsort and for the worst case of Quickselect: 
\begin{align*}
T(n) 	&= n + \frac{6n}{5f(n)} + T\left(\frac{n}{5f(n)}\right) + T\left(n-\frac{3n}{10f(n)}\right) .
\end{align*}
This yields $T(n)\leq c nf(n)$ for some $c$ large enough.
Hence with this pivot selection strategy, we reach a worst case running time for QuickHeapsort of $n\lg n + \mathcal{O}(nf(n))$ and~-- if $f(n) \in \omega(1)$~-- average running time as stated in \prettyref{sec:analysis}. 

Driving this strategy to the end and choosing $f(n)=1$ leads to Ultimate Heapsort (or better a slight modification of it~-- and Quickselect turns into the Median of Medians algorithm). Then we have $T(n) = n\lg n + \mathcal{O}(n)$ for the worst case of QuickHeapsort. However, our bound for the average case does not hold anymore.

In order to obtain an $n\lg n + \mathcal{O}(n)$-bound for the worst case without loosing our bound for the average case, we can apply a simple trick: 
Whenever after the partitioning it turns out that the pivot does not lie in the interval $\{\frac{n}{4} , \dots , \frac{3n}{4} \}$ we switch to Ultimate Heapsort.
This immediately yields the worst case bound of  $n\lg n + \mathcal{O}(n)$. 
Moreover, the proof of \prettyref{thm:otto} can easily be changed in order to deal with this modification: Let $C\cdot n$ be the worst case number of comparisons for pivot selection and partitioning in Ultimate Heapsort. We can change \prettyref{eq:prob_small} to
\[\Pr{\text{pivot }< \mu - \delta n} \leq \frac{1}{8C}\epsilon .\]
Then, the rest of the proof is exactly the same.
 Hence, \prettyref{thm:otto} and \prettyref{thm:exptime} are also valid when switching to Ultimate Heapsort in the case of a `bad' choice of the pivot.

\section{Pseudocode of Basic QuickHeapsort}\label{app:pseudo_basic}
\begin{algorithmus}{}\label{alg:quick_heapsort}%
  \tab \coprocedure QuickHeapsort($A[1 .. n]$)
  \\ \tab \cobegin  
\\ \tabb \coif $n>1$ \cothen
   	\\ \tabbb $p:=$ ChoosePivot;
   	\\ \tabbb $k :=$ PartitionReverse($A[1.. n]$, $p$); 
   	\\ \tabbb  \coif $k \leq n / 2$    \cothen
		\\ \tabbbb TwoLayerMaxHeap($A[1 .. n]$, $k-1$);\comment{heap-area: $\{1.. k - 1\}$} 
 		\\ \tabbbb swap($A[k],A[n-k+1])$;
		\\ \tabbbb QuickHeapsort($A[1.. n-k]$); \comment{recursion}
	\\ \tabbb  \coelse
 		\\ \tabbbb TwoLayerMinHeap($A[1 .. n]$, $n-k$); \comment{heap-area: $ \{k + 1.. n\}$}
 		\\ \tabbbb swap($A[k],A[n-k+1])$;
 		\\ \tabbbb QuickHeapsort($A[(n-k+2) .. n]$); \comment{recursion}
	\\ \tabbb  \coendif
  \\ \tabb  \coendif	
  \\ \tab \coendprocedure 
 \end{algorithmus} 
The ChoosePivot function returns an element $p$ of the array chosen as pivot. 
 The PartitionReverse function returns an index $k$ and rearranges the array $A$ so that $p =  A[k]$, $A[i]\geq A[k]$ for $i< k$ and $A[i]\leq A[k]$ for $i> k$ using $n-1$ comparisons.
\begin{algorithmus}{}\label{alg:special_leaf}%
  \cofunction SpecialLeaf$(A[1 .. m])$:
  \\ \tab \cobegin
  \\ \tabb  $i:=1$;
  \\ \tabb  \cowhile $2i\leq m$ \codo \comment{i.e. while $i$ is not a leaf}
  \\ \tabbb  \coif $2i+1 \leq m$ \coand $A[2i+1] > A[2i]$ \cothen 
  \\ \tabbbb   $A[i]:=A[2i+1]$;
  \\ \tabbbb   $i:=2i+1$;
  \\ \tabbb  \coelse
  \\ \tabbbb   $A[i]:=A[2i]$;
  \\ \tabbbb   $i:=2i$;
  \\ \tabbb \coendif
  \\ \tabb \coendwhile
  \\ \tabb \coreturn $i$;
  \\ \tab \coendfunction
\end{algorithmus}

\begin{algorithmus}{}\label{alg:max_heap}%
  \tab \coprocedure TwoLayerMaxHeap($A[1 .. n]$, $m$)
  \\ \tab \cobegin
 	 \\ \tabb  ConstructHeap$(A[1 .. m])$;
 	 \\ \tabb  \cofor $i := 1 $ \coto $ m $ \codo
 	 \\ \tabbb $temp :=  A[n - i + 1]$;
 	 \\ \tabbb $A[n - i + 1] := A[1]$;
	 \\ \tabbb $j:=$SpecialLeaf$(A[1 .. m])$;
	 \\ \tabbb $A[j]:=temp$;
	 \\ \tabb \coendfor
  \\ \tab \coendprocedure
 \end{algorithmus} 
The procedure TwoLayerMinHeap is symmetric to TwoLayerMaxHeap, so we do not present its pseudocode here.

\end{document}